\let\hat\widehat
\let\tilde\widetilde
\newtheorem{theorem}{Theorem}
\newtheorem{lemma}[theorem]{Lemma}
\newtheorem{corollary}[theorem]{Corollary}
\newenvironment{proof}{{\bf Proof.}}{$\Box$}
\newcommand{\E}{\mbox{$\mathbb{E}$}}
\definecolor{myblue}{RGB}{50,50,150}
\begin{document}

\begin{center}
\textcolor{myblue}{
\textsf{\textbf{\Large Decorrelated Variable Importance}}}\\
Isabella Verdinelli and Larry Wasserman\\
November 19 2021
\end{center}

\begin{quote}
{\em
Because of the widespread use of black box prediction methods
such as random forests and neural nets,
there is renewed interest
in developing methods for quantifying variable importance as
part of the broader goal of
interpretable prediction.
A popular approach is to
define a variable importance parameter --- known as
LOCO (Leave Out COvariates) ---
based on dropping covariates from a
regression model. 
This is essentially a nonparametric version of $R^2$.
This parameter is very general
and can be estimated nonparametrically,
but it can be hard to interpret
because it is affected by correlation between covariates.
We propose a method for mitigating the effect of correlation
by defining a modified version of LOCO.
This new parameter is difficult to estimate nonparametrically, but
we show how to estimate it using semiparametric models.}
\end{quote}

\section{Introduction}

Due to the increasing popularity of
black box prediction methods like
random forests and neural nets,
there has been renewed interest
in the problem
of quantifying variable importance in
regression.
Consider predicting $Y\in\mathbb{R}$
from covariates $(X,Z)$
where
$X\in\mathbb{R}^g$
and
$Z\in\mathbb{R}^h$.
We have
separated the covariates
into $X$ and $Z$
where $X$ represents the covariates
whose importance we wish to assess.
In what follows,
we let
$U=(X,Z,Y)$ 
denote all the variables.
Define
$\mu(x,z) = \E[Y|X=x,Z=z]$
so that
$$
Y = \mu(X,Z) + \epsilon
$$
where
$\E[\epsilon|X,Z]=0$.

A popular measure of the importance of $X$ is
\begin{equation}
\psi_L = \E[(\mu(Z) - \mu(X,Z))^2] = \E[(Y-\mu(Z))^2] - \E[(Y - \mu(X,Z))^2].
\end{equation}
where $\mu(Z) = \E[Y|Z=z]$.
Up to scaling, $\psi_L$ is a nonparametric version of the usual $R^2$ from standard regression.
This was called LOCO ({\bf L}eave {\bf O}ut {\bf CO}variates) in
\cite{lei2018distribution} and \cite{rinaldo2019bootstrapping}
and has been further studied recently in
\cite{williamson2021nonparametric},
\cite{williamson2020unified} and
\cite{zhang2020floodgate}.
The parameter $\psi_L$ is appealing because it is very general
and easy to interpret.
But it suffers from some problems. In particular,
the value of $\psi_L$ depends on the correlation between $X$ and $Z$.
When $X$ and $Z$ are highly correlated,
$\psi$ will be near 0 since removing $X$ has little effect.
In some applications, this might be undesirable as it obscures interpretability.
We refer to this problem as {\em correlation bias}.
Another, more technical problem with LOCO, is its quadratic nature
which causes some issues when constructing confidence intervals.

In this paper,
we define a modified version of $\psi_L$ 
denoted by $\psi_0$ that is
invariant to the correlation between $X$ and $Z$.
There is a tradeoff:
the modified parameter $\psi_0$ is free from correlation bias
but it is more difficult to estimate than $\psi_L$.
In a sense, we remove the correlation from the estimand 
at the expense of larger confidence intervals.
This is similar to estimating a coefficient in a linear regression
where the value of the regression coefficient does not depend on the correlation
between $X$ and $Z$
while the width of the confidence interval does.
To reduce the difficulties
in estimating $\psi_0$,
we approximate $\mu(x,z)$
with the semiparametric model
$\mu(x,z) = \beta(z)^T x + f(z)$.

\bigskip

{\bf Related Work.}
Assessing variable importance
is an active area of research.
Recent papers on LOCO include
\cite{lei2018distribution,rinaldo2019bootstrapping,
williamson2021nonparametric,
williamson2020unified,
zhang2020floodgate}.
Another approach is to use
derivatives of the regression function as
suggested in
\cite{samarov1993exploring},
and has received renewed attention in the
machine learning literature
\citep{ribeiro2016model}.
There has been a surge of interest in
an approach based on Shapley values, see for example,
\cite{messalas2019model,
aas2019explaining,
lundberg2016unexpected,
covert2020understanding,
fryer2020shapley,
covert2020improving,
israeli2007shapley,
benard2021shaff}.
We discuss derivatives and Shapley values in Section
\ref{section::other}.
Another paper that uses semiparametric models
for intepretability is
\cite{sani2020semiparametric}
but that paper does not focus on
variable importance.

\bigskip

{\bf Paper Outline.}
In Section \ref{section::issues}
we describe some issues related to LOCO
and this leads us to define
a few modified versions of the parameter.
In Section \ref{section::inference}
we discuss inference for the parameters.
Section \ref{section::simulations}
contains some simulation studies.
Section \ref{section::other}
discusses other issues and other measures of
variable importance.
In Section, \ref{section::balancing}
we introduce a different approach based on covariate balancing.
A concluding discussion is in 
Section \ref{section::conclusion}.
Technical details and proofs are in an appendix.

\section{Issues With LOCO}
\label{section::issues}

The parameter $\psi_L$ is general
and it is easy to obtain point estimates for it;
see Section \ref{section::psi}.
But it does have two shortcomings
which we now discuss.

\subsection{Issue 1: Inference For Quadratic Functionals}

The first, and less serious issue, is that
$\psi_L$ is a quadratic parameter
and it is difficult to get confidence intervals for
quadratic parameters because their
limiting distribution and rate of convergence change
as $\psi_L$ approaches 0.
This is actually a common problem but it receives little attention.
Many other parameters have this problem, including
distance correlation \citep{szekely2007measuring},
RKHS correlations \citep{sejdinovic2013equivalence}
and kernel two-sample statistics
\citep{gretton2012kernel}
among others.

To illustrate,
consider the following toy example.
Let $Y_1,\ldots, Y_n \sim N(\mu,\sigma^2)$
and consider estimating
$\psi = \mu^2$ with
$\hat\psi = \overline{Y}_n^2$.
When $\mu\neq 0$,
we have
$\sqrt{n}(\hat\psi - \psi)\rightsquigarrow N(0,\tau^2)$
for some $\tau^2$.
When $\mu=0$,
$\hat\psi \sim \sigma^2 \chi_1^2/n$.
When
$\mu$ is close to 0, its distribution is neither Normal nor chi-squared,
and the rate of convergence can be anything between
$1/n$ and $1/\sqrt{n}$.

More generally,
when dealing with a quadratic functional $\psi$,
it is often the case that
an estimator $\hat\psi$
converges to a Normal at a $n^{-1/2}$ rate
when $\psi \neq 0$
but at the null, where $\psi=0$,
the influence function for the parameter vanishes,
the rate becomes $n^{-1}$
and the limiting distribution is typically
a combination of $\chi^2$ random variables.
Near the null, we get behavior in between these two cases.
A valid confidence interval $C_n$ should satisfy
$P(\psi_n \in C_n)\to 1-\alpha$
even if $\psi_n$ is allowed to change with $n$.
In particular, we want to allow $\psi_n\to 0$.
Finding a confidence interval with this
uniformly correct coverage,
with length $n^{-1/2}$ away from the null
and length $n^{-1}$ at the null
is, to the best of our knowledge, an unsolved problem.

Our proposal is 
to construct a conservative confidence interval
that does not have length $O(1/n)$ at the null.
We replace the standard error $se$ of $\hat\psi$
with
$\sqrt{{se^2} + c^2/n}$
where $c$ is a constant.
We take $c = ({\rm Var}[Y])^2$ 
to put the quantity on the right scale, but other constants could be used.
This leads to valid confidence intervals
but they are conservative near the null
as they shrink at rate $n^{-1/2}$
instead of $n^{-1}$.

We are only aware of two other attempts to address this issue.
Both involve expanding the width of the confidence interval to be $O(n^{-1/2})$.
\cite{dai2021significance}
added noise of the form $c Z/\sqrt{n}$ 
to the estimator,
where $Z\sim N(0,1)$. 
They choose
$c$ by permuting the data many times
and finding a $c$ that gives good coverage under the simulated permutations.
However, this is computationally expensive
and adding noise seems unnecessary.
\cite{williamson2020unified}
deal with this problem by
writing $\psi$ as a sum of two
parameters
$\psi = \psi_1 + \psi_2$
such that neither $\psi_1$ nor $\psi_2$ vanish
when $\psi=0$.
Then, they estimate $\psi_1$ and $\psi_2$
on separate splits of the data.
This again amounts to adding noise of size $O(1/\sqrt{n})$.

All three approaches are basically the same;
they have the effect of expanding
the confidence interval by $O(n^{-1/2})$
which maintains validity at the expense
of efficiency at the null.
Our approach has the virtue of being simple and fast.
It does not require adding noise,
extra calculations or doing an extra split of the data.

To see that 
expanding the standard error does lead to an interval with correct
coverage,
let $\hat\psi$ denote an estimator of a parameter $\psi_n$
which we allow to change with $n$.
We are concerned with the case were the
bias $b_n$ satisfies $b_n = o(n^{-1/2})$
and the variance $v_n$ satisfies
$v_n = o(1/n)$.
(The variance would be of order $1/n$ in the non-degenerating case.)
Then, by Markov's inequality,
the non-coverage of
the interval
$\hat\psi_n \pm z_{\alpha/2}\sqrt{se^2 + c^2/n}$ is
\begin{align*}
P\Bigl(|\hat\psi_n - \psi_n| > z_{\alpha/2}\sqrt{se^2 + c^2/n}\Bigr) &\leq
P\Bigl(|\hat\psi_n - \psi_n| > z_{\alpha/2}\sqrt{c^2/n}\Bigr)\\
& \leq
\frac{n}{c z_{\alpha/2}^2} \E[ | \hat\psi_n - \psi_n|^2] =
\frac{n}{c z_{\alpha/2}^2} (b_n^2 + v_n) = o(1).
\end{align*}

\subsection{Issue 2: Correlation Bias}

The second and more pernicious problem 
is that $\psi_L$ depends on the correlation between $X$ and $Z$.
In particular, if $X$ and $Z$ are highly correlated,
then $\psi_L$ will typically be close to 0.
We call this, {\em correlation bias}.
There may be applications where this is acceptable.
But in some cases we may want to alleviate this bias
and that is the focus of this paper.

To appreciate the effect of correlation bias,
consider the linear model
$Y = \beta X + \theta Z + \epsilon$.
In this case,
a natural measure of variable importance is $\beta$
which is unaffected by correlation between $X$ and $Z$.
The standard error of the estimate
$\hat\beta$ is affected by the correlation
but the estimand itself is not.
For this model,
$\psi_L = \beta^2 \gamma^2$
where
$\gamma^2 = \E[(X-\nu(Z))^2]$ and
$\nu(z) = \E[X|Z=z]$.
This makes it clear that
$\psi_L\to 0$
as $X$ and $Z$ become more correlated.
The same fate befalls the partial correlation $\rho$ between
$Y$ and $X$ which in this model is
$\rho = (1 + \frac{\beta^2 \sigma^2}{\gamma^2})^{-1/2}$
where
$\sigma^2 = {\rm Var}[\epsilon]$.
Again, $\rho \to 0$ as
$\gamma \to 0$.

To deal with this problem,
we define a modified LOCO parameter
$\psi_0$ which is unaffected by the
dependence between $X$ and $Z$.
Let
$p_0(x,y,z) = p(y|x,z)p(x)p(z)$.
Then
$p_0$ is the distribution
that is closest to
$p$ in Kullback-Leibler distance
subject to making $X$ and $Z$ independent.
We define
\begin{equation}\label{eq::psi*}
\psi_0 = \E_0[ (\mu_0(X,Z) - \mu_0(Z))^2 ].
\end{equation}
A simple calculation shows that
$\mu_0(z) = \E_0[Y|Z=z] = \int\mu(x,z) p(x) dx$
and so
\begin{equation}\label{eq::psi0}
\psi_0 = \int (\mu_0(x,z) - \mu_0(z))^2 p(x)p(z) dx dz.
\end{equation}
We can think of $\psi_0$
as a counterfactual quantity
answering the question:
what would the change in $\mu(X,Z)$ be
if we dropped $X$ and had $X$ and $Z$ been independent.

This parameter completely eliminates the correlation
bias but, as we show in our simulations, it can be hard to
get an accurate estimate of $\psi_0$.
In particular, nonparametric confidence intervals are wide.
A simple, but somewhat ad-hoc solution,
is to first remove $Z_j's$ that are highly correlated with $X$.
That is,
define $\psi_1 = \E[ ( \mu(V) - \mu(X,V))^2 ]$
where $V=(Z_j:\ |\rho(X,Z_j)| \leq t)$ for some $t$
where $\rho$ is a measure of dependence.

The main solution we propose is to
use the semiparametric model
$\mu(x,z) = x^T\beta(z) + f(z)$.
Under this model,
one can show that
$\psi_0$ takes the form
${\rm tr}\left(\Sigma_X \E[\beta(Z)\beta(Z)^T]\right)$
where
$\Sigma_X = {\rm Var}[X]$.
(See appendix 8.4 for details).
However, this parameter is still difficult to estimate
so we propose the following two simpler models.
First, 
let $\mu(x,z) = \beta^T x + f(z)$.
Then $\psi_0$  becomes
\begin{equation}\label{eq::psi2}
\psi_2 = \beta^T \Sigma_X \beta.
\end{equation}
The second model is
\begin{equation}\label{eq::model2}
\mu(x,z) = \beta^T x + \sum_{j}\sum_j \gamma_{jk} x_j z_k + f(z).
\end{equation}
In Section \ref{section::est3} we show that $\psi_0$ then becomes
\begin{equation}\label{eq::psi3}
\psi_3 = \theta^T \Omega \theta
\end{equation}
where
$$
\theta = 
\Biggl\{
\E\bigl[\tilde Z \tilde Z^T \otimes (X-\nu(Z))(X-\nu(Z))^T\bigr] \Biggr\}^{-1}
\E\Biggl[ \biggl(Y-\mu(Z)\biggr)\  \biggl(\tilde Z \otimes (X-\nu(Z))\biggr)\Biggr].
$$
$\nu(z) = \E[X|Z=z]$,
$\tilde Z = (1,Z)$
and
$$
\Omega = \Sigma_X \otimes \E[ \tilde Z \tilde Z^T] =
\Sigma_X \otimes \left(
\begin{array}{cc}
1 & m_Z^T \\
m_Z & \Sigma_Z + m_Z m_Z^T
\end{array}
\right),
$$
$m_Z=\E[Z]$ and
$\Sigma_Z = {\rm Var}[Z]$.
Table \ref{table::psi}
summarizes the expressions for the parameters.

\bigskip

{\bf Remark:}
{\em In all the above definitions,
we can replace
$X$ with
$b(X) = (b_1(X),\ldots, b_k(X))$
for a given set of basis functions
$b_1,\ldots, b_k$
to make the model more flexible.
For example,
we can take
$b(X) = (X,X^2,X^3)$ or
an orthogonalized version of the polynomials,
which is what we use in several of our examples.}

In these semiparametric models,
we can estimate
the nuisance functions
$\nu(z) = \E[X|Z=z]$ and $\mu(z)$
either nonparametrically or parametrically.

\begin{table}
\fbox{\parbox{5.5in}{
\begin{center}
\begin{tabular}{ll}
$\psi_0 = \int\int (\mu(x,z)-\mu_0(z))^2 p(x)p(z) dx dz$ & \hspace{1em}
$\psi_1 = \E[ (\mu(X,V)-\mu(V))^2]$\\
$\psi_2 = \beta^T \Sigma_X \beta$ & \hspace{1em}
$\psi_3 = \theta^T \Omega \theta$\\
\\
\rule{2in}{.1mm}  & \hspace{-7em}\rule{3in}{.1mm}\\
\\
$\mu_0(z) = \int \mu(x,z) p(x)dx$ &
$V = (Z_j:\ |\rho(X,Z_j)| \leq t)$ \\
$\tilde Z^T = (1,Z^T)$ &
$\Omega = \Sigma_X \otimes
\left[
\begin{array}{cc}
1 & m_Z^T \\
m_Z & \Sigma_Z + m_Z m_Z^T\\
\end{array}
\right] $\\
$\beta = \E[ (Y-\mu(Z))(X-\nu(Z))]/\E[(X-\nu(Z))^2]$\\
\multicolumn{2}{l}{
$\theta = 
\Biggl\{
\E[\tilde Z \tilde Z^T \otimes (X-\nu(Z))(X-\nu(Z))^T] \Biggr\}^{-1}
\E\Biggl[ \biggl(Y-\mu(Z)\biggr)\  \biggl(\tilde Z \otimes (X-\nu(Z))\biggr)\Biggr]$}\\
\end{tabular}
\end{center}
}}
\bigskip
\caption{\bf Summary of Decorrelated Parameters}.
\label{table::psi}
\end{table}

\section{Inference}
\label{section::inference}

In this section we discuss estimation
of $\psi\in\{\psi_L,\psi_0,\psi_1,\psi_2,\psi_3\}$.
For $\psi_0, \psi_2$ and $\psi_3$
we use one-step estimation
which we now briefly review. 
See \cite{hines2021demystifying}
for a recent tutorial on one-step estimators.
Let $\psi(\gamma)$ be a parameter
with efficient influence function $\phi(u,\gamma,\psi)$
where $\gamma$ denotes nuisance functions.
We split the data into two groups
${\cal D}_0$ and ${\cal D}_1$
and we estimate $\gamma$ from ${\cal D}_0$.
The one-step estimator is
$$
\hat \psi = \hat\psi_{\rm pi} + \frac{1}{n}\sum_i \phi(U_i,\hat\gamma,\hat\psi_{\rm pi})
$$
where
$\hat\psi_{\rm pi} = \psi(\hat \gamma)$
is the plug-in estimator and the average is over ${\cal D}_1$.
This estimator comes from the von Mises expansion
of $\psi(\gamma)$ around a point $\overline{\gamma}$ given by
$\psi(\gamma) = \psi(\overline{\gamma}) + \int \phi(u,\overline{\gamma}) dP(u) + R$
where $R$ is the remainder.
Alternatively, we can define $\hat\psi$ as the solution to
the estimating equation
$n^{-1}\sum_i \phi(U_i,\hat\gamma,\psi)=0$.

Both estimators have second order bias
$||\hat\gamma - \gamma||^2$.
Under appropriate conditions,
both estimators satisfy
$\sqrt{n}(\hat\psi - \psi)\rightsquigarrow N(0,\tau^2)$
where $\tau^2 = \E[\phi^2(U,\gamma,\psi)]$.
The key condition 
for this central limit theorem to hold is that
$||\hat\gamma - \gamma||^2 = o_P(n^{-1/2})$
which holds under standard smoothness assumptions.
For example,
if $\gamma$ is in a Holder class of smoothness $s$,
then an optimal estimator $\hat\gamma$ satisfies
$||\hat\gamma - \gamma||^2 = O_P( n^{-2s/(2s+d)}) = o_P(n^{-1/2})$
when
$s> d/2$.
The plugin estimator has first order bias
$||\hat\gamma - \gamma||$
which will never be $o_P(n^{-1/2})$.

The usual confidence interval is
$\hat\psi \pm z_{\alpha/2} {\rm se}$
where
${\rm se}^2 = \hat\tau^2/n$ and
$\hat\tau^2 = n^{-1}\sum_i \phi^2(U_i,\hat\gamma)$.
But we find that this often underestimates
the standard error.
Instead, we use a different approach described in Section \ref{section::confidence}.
We consider three different estimators
for the nuisance functions
$\mu(z)$ and $\nu(z)$:
(i) linear, (ii) additive and (iii) random forests.

\subsection{Estimating $\psi_L$}
\label{section::psi}

\cite{williamson2021nonparametric}
found the
efficient influence function for $\psi_L$.
However,
in \cite{williamson2020unified}
the authors note that
one can avoid having to
use the influence function by rewriting $\psi_L$ as
$$
\psi_L = \E[ (Y-\mu(Z))^2] - \E[ (Y-\mu(X,Z))^2].
$$
It is easy to check
that the corresponding plugin estimator
$$
\hat\psi_L = \frac{1}{n}\sum_i (Y_i - \hat\mu(Z_i))^2 - 
\frac{1}{n}\sum_i (Y_i - \hat\mu(X_i,Z_i))^2
$$
already has second order bias 
$O(||\hat \mu - \mu||^2)$
so that
using the influence function is unnecessary.

\subsection{Estimating $\psi_0$}

We first derive the efficient, nonparametric estimator
of $\psi_0$ and then
we discuss some issues.
Recall that
$U =(X,Y,Z)$.

\begin{theorem}\label{theorem::psi0}
Let
$\psi_0 = \psi_0(\mu,p) = \int\int (\mu(x,z)-\mu_0(z))^2 p(x)p(z) dx dz$.
The efficient influence function is
\begin{align*}
\phi(U,\mu,p) &=
\int(\mu(x,Z) - \mu_0(Z))^2 p(x) dx  +
\int(\mu(X,z) - \mu_0(z))^2 p(z) dz\\
& \ +
2 \frac{p(X)p(Z)}{p(X,Z)}(\mu(X,Z)-\mu_0(Z))(Y - \mu(X,Z)) - 2\psi(p).
\end{align*}
In particular,
we have the following von Mises expansion.
Let $(\overline \mu,\overline p)$ be arbitrary and let
$(\mu,p)$ denote the true functions.
Then
$$
\psi_0( \mu, p) = 
\psi_0(\overline{\mu},\overline{p}) +
\int\int \phi(u,\overline\mu,\overline p) dP(u) + R
$$
where the remainder $R$ satisfies
\begin{align*}
R &=
O( ||\overline p_X-p_X|| \times ||\overline \delta- \delta||) + 
O( ||\overline p_Z-p_Z|| \times ||\overline \delta-\delta||) +
O( ||\overline p_X - p_X|| \times ||\overline p_Z - p_Z||)+
O( ||\overline \delta - \delta||^2)
\end{align*}
and
$\delta = \mu(x,z) - \mu_0(z)$.
Hence, if
$||\overline p_X-p_X|| = o_P(n^{-1/4})$,
$||\overline p_Z-p_Z|| = o_P(n^{-1/4})$,
$||\overline \delta - \delta|| = o_P(n^{-1/4})$
then
$\sqrt{n} R = o_P(1)$.
\end{theorem}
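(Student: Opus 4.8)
The plan is to obtain $\phi$ as the canonical gradient of $\psi_0$ by the pathwise-derivative method, and then to read off the von Mises expansion by forming the exact second-order remainder and bounding it term by term.

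For the influence function, I would introduce a regular one-dimensional submodel $\{P_t\}$ with $P_0=P$ and score $s(u)=\partial_t\log p_t(u)|_{t=0}$, and compute $\dot\psi_0:=\partial_t\psi_0(P_t)|_{t=0}$. Writing $\delta(x,z)=\mu(x,z)-\mu_0(z)$, the parameter $\psi_0=\int\int\delta^2 p_X p_Z$ depends on $t$ through $\mu$ (hence through $p(y\mid x,z)$), through $p_X$, and through $p_Z$, so differentiating the product gives three groups of terms. The two density pieces $\int\int\delta^2\dot p_X p_Z$ and $\int\int\delta^2 p_X\dot p_Z$ become, after writing $\dot p_X=p_X s_X$ with $s_X(x)=\E[s\mid X=x]$ (and similarly for $Z$) and using the tower property, $\E[g_X(X)s(U)]$ and $\E[g_Z(Z)s(U)]$ with $g_X(x)=\int\delta(x,z)^2 p_Z(z)\,dz$ and $g_Z(z)=\int\delta(x,z)^2 p_X(x)\,dx$; these are the first two terms of $\phi$. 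The remaining $\dot\delta$ piece is $2\int\int\delta\,\dot\delta\,p_X p_Z$, and here the key simplification is that the contribution of $\dot\mu_0$ drops out: since $\int\delta(x,z)p_X(x)\,dx=0$ for every $z$, the factor $\int\delta\,\dot\mu_0\,p_X\,dx$ vanishes. What survives is $2\int\int\delta\,\dot\mu\,p_X p_Z$, and substituting $\dot\mu(x,z)=\E[(Y-\mu(X,Z))s(U)\mid X=x,Z=z]$ and changing the integrating measure from $p_X p_Z$ to the true joint $p_{XZ}$ produces the density ratio, yielding $\E\big[2\frac{p(X)p(Z)}{p(X,Z)}\delta(X,Z)(Y-\mu(X,Z))\,s(U)\big]$, the third term.

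Collecting, $\dot\psi_0=\E[\phi_0(U)s(U)]$ with $\phi_0=g_X+g_Z+2\frac{p_X p_Z}{p_{XZ}}\delta(Y-\mu)$. Since $\E[g_X(X)]=\E[g_Z(Z)]=\psi_0$ and the last term has conditional mean zero given $(X,Z)$, we get $\E[\phi_0]=2\psi_0$, so subtracting $2\psi(p)$ centers the gradient and reproduces the stated $\phi$. Because the model is nonparametric its tangent space is all of $L^2_0(P)$, hence the centered gradient is unique and is therefore the efficient influence function. For the von Mises expansion I would define $R$ as the exact difference $R=\psi_0(\mu,p)-\psi_0(\overline\mu,\overline p)-\int\phi(u,\overline\mu,\overline p)\,dP(u)$ and expand it, writing $D_X=\overline p_X-p_X$, $D_Z=\overline p_Z-p_Z$, $D_\delta=\overline\delta-\delta$. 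Evaluating the three pieces of $\int\phi(\cdot,\overline\mu,\overline p)\,dP$ against the truth (using $\E[Y\mid X,Z]=\mu$ to replace $Y-\overline\mu$ by $\mu-\overline\mu$ in the residual term) and subtracting from $\psi_0(\mu,p)-\psi_0(\overline\mu,\overline p)$, the first-order terms cancel by the defining property of $\phi$; what remains regroups into the bilinear cross term $\int\int\delta^2 D_X D_Z$ together with products of $D_\delta$ against $D_X$, against $D_Z$, and against itself. The bilinearity of $\psi_0$ in $(p_X,p_Z)$ and its quadratic dependence on $\delta$ explain why exactly these four products appear, and in particular why no pure $\|D_X\|^2$ or $\|D_Z\|^2$ term survives. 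Applying Cauchy--Schwarz to each product gives the stated bound, and if each of $\|D_X\|,\|D_Z\|,\|D_\delta\|$ is $o_P(n^{-1/4})$ then every product is $o_P(n^{-1/2})$, so $\sqrt n\,R=o_P(1)$.

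The main obstacle I anticipate is the residual/density-ratio term. After turning $Y-\overline\mu$ into $\mu-\overline\mu$, one must split $\mu-\overline\mu$ using the centering identity $\int\delta\,p_X\,dx=0$ so that its first-order part cancels against the two density-correction terms and only genuine products of errors remain; the delicate bookkeeping is in controlling the reweighting $p_X p_Z/p_{XZ}$ (equivalently the joint-density error) so that the remainder collapses into the clean product form above rather than leaving a stray first-order piece.
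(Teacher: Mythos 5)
Your proposal is correct, and its second half --- defining $R$ as the exact difference $\psi_0(\mu,p)-\psi_0(\overline\mu,\overline p)-\int\phi(u,\overline\mu,\overline p)\,dP(u)$, integrating out $Y$ so that $Y-\overline\mu$ becomes $\mu-\overline\mu$, and regrouping via the centering identity $\int(\overline\mu(x,z)-\overline\mu_0(z))\,\overline p(x)\,dx=0$ --- is exactly the paper's remainder computation, down to the same four product terms. Where you genuinely differ is the derivation of $\phi$: the paper computes the Gateaux derivative along point-mass contaminations $(1-\epsilon)P+\epsilon\,\delta_{XYZ}$, working out $w'$, $\mu'$ and $\mu_0'$ separately and assembling them, whereas you differentiate along smooth submodels with score $s$, using $\dot p_X=p_X s_X$, $\dot\mu(x,z)=\E[(Y-\mu)s\mid X=x,Z=z]$, the vanishing of the $\dot\mu_0$ contribution (the same centering identity), and then fullness of the nonparametric tangent space to conclude the centered gradient is the unique, hence efficient, influence function. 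Your route buys a cleaner justification of the word ``efficient'' --- the paper's point-mass calculation identifies a gradient but never addresses the tangent space --- at the cost of some care about regularity of the paths; the paper's calculation is more mechanical and feeds directly into its remainder algebra. One further point in your favor: the stray cross term you flag, produced by replacing the weight $p_{XZ}\,\overline p_X\overline p_Z/\overline p_{XZ}$ by $\overline p_X\overline p_Z$, is real --- it is the term the paper calls $S$, of order $\|p_{XZ}-\overline p_{XZ}\|\times\|\mu-\overline\mu\|$ --- and it quietly drops out of the paper's final display for $-R$; it is covered by the joint-density error appearing in the appendix version of the bound, but not obviously by the marginal-only bound stated in the theorem, so your ``delicate bookkeeping'' caveat points at precisely the step that needs the most care.
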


The one-step estimator
is
$$
\hat\psi_0 = \psi_0(\hat\mu,\hat p) +
\frac{1}{n}\sum_i \phi(U_i,\hat\mu,\hat p).
$$
The estimator from solving the estimating equation is
$\hat\psi = (2n)^{-1}\sum_i L(U_i,\hat\mu,\hat p)$
where
\begin{align}\nonumber
L(U,\mu,p) &=
\int(\mu(x,Z) - \mu_0(Z))^2 p(x) dx  +
\int(\mu(X,z) - \mu_0(z))^2 p(z) dz\\
& \ +
2 \frac{p(X)p(Z)}{p(X,Z)}(\mu(X,Z)-\mu_0(Z))(Y - \mu(X,Z)).
\label{eq::L}
\end{align}

\begin{corollary}
Suppose that
$||\hat p - p|| = o_P(n^{-1/4})$ and
$||\hat \mu - \mu|| = o_P(n^{-1/4})$.
When $\psi_0\neq 0$,
for either of the two estimators above,
$$
\sqrt{n}(\hat \psi_0 - \psi_0)
\rightsquigarrow N(0,\sigma^2)
$$
where
$\sigma^2 = \E[\phi^2(U,\mu,p)]$.
\end{corollary}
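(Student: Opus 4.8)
The plan is to reduce the corollary to the von Mises expansion of Theorem~\ref{theorem::psi0} combined with the sample-splitting setup of Section~\ref{section::inference}. Write $\gamma=(\mu,p)$ for the true nuisances, $\hat\gamma=(\hat\mu,\hat p)$ for the estimates fit on ${\cal D}_0$, and let $P_n$ denote the empirical average over ${\cal D}_1$ and $P$ the population expectation, so that $(P_n-P)$ is the centered average over the held-out split. Both the one-step and the estimating-equation estimators are asymptotically linear with influence function $\phi(\cdot,\gamma)$ under the stated conditions, as recalled in Section~\ref{section::inference}; I would carry out the argument in detail for the one-step estimator $\hat\psi_0=\psi_0(\hat\gamma)+P_n\phi(\cdot,\hat\gamma)$, the other case being analogous.

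First I would apply the expansion of Theorem~\ref{theorem::psi0} with base point $\overline\gamma=\hat\gamma$ and true point $\gamma$, which yields $\psi_0(\hat\gamma)-\psi_0(\gamma)=-P\phi(\cdot,\hat\gamma)-R$. Substituting this into the definition of $\hat\psi_0$ produces the clean decomposition
\begin{equation*}
\hat\psi_0-\psi_0=(P_n-P)\phi(\cdot,\hat\gamma)-R,
\end{equation*}
and I would then split the leading term into an oracle piece and a drift piece,
\begin{equation*}
(P_n-P)\phi(\cdot,\hat\gamma)=(P_n-P)\phi(\cdot,\gamma)+(P_n-P)\bigl[\phi(\cdot,\hat\gamma)-\phi(\cdot,\gamma)\bigr].
\end{equation*}

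The three resulting terms are handled separately. Applying the expansion at $\overline\gamma=\gamma$ shows $\E[\phi(U,\gamma)]=0$, so the ordinary central limit theorem gives $\sqrt{n}\,(P_n-P)\phi(\cdot,\gamma)\rightsquigarrow N(0,\sigma^2)$ with $\sigma^2=\E[\phi^2(U,\mu,p)]$. For the drift term I would condition on ${\cal D}_0$: since $\hat\gamma$ is then fixed and independent of ${\cal D}_1$, the term has conditional mean zero and conditional variance at most $n^{-1}\E\bigl[(\phi(U,\hat\gamma)-\phi(U,\gamma))^2\mid{\cal D}_0\bigr]$, which is $o_P(n^{-1})$ whenever $\phi$ is $L^2$-continuous in the nuisances and $\|\hat\gamma-\gamma\|\to0$; hence the drift term is $o_P(n^{-1/2})$, and sample splitting removes any need for Donsker conditions on the nonparametric fits. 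Finally, $\sqrt{n}\,R=o_P(1)$ follows directly from Theorem~\ref{theorem::psi0}: since $\delta$ and the marginals $p_X,p_Z$ are smooth functionals of $(\mu,p)$, the assumed rates $\|\hat p-p\|=o_P(n^{-1/4})$ and $\|\hat\mu-\mu\|=o_P(n^{-1/4})$ transfer to each second-order factor appearing in $R$. Combining the three pieces and invoking Slutsky's theorem gives $\sqrt{n}(\hat\psi_0-\psi_0)\rightsquigarrow N(0,\sigma^2)$.

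The main obstacle is the drift term, and specifically the $L^2$-continuity of $\phi$ in the nuisances. The influence function contains the density ratio $p(X)p(Z)/p(X,Z)$, which need not be bounded, so verifying $\E\bigl[(\phi(U,\hat\gamma)-\phi(U,\gamma))^2\mid{\cal D}_0\bigr]=o_P(1)$ requires an overlap or boundedness condition controlling this ratio and its estimate; without it the conditional-variance bound can fail. The hypothesis $\psi_0\neq0$ plays a different role: it ensures the influence function does not degenerate, so that $\sigma^2>0$ and the Gaussian limit is non-trivial. This is precisely the degeneracy flagged as Issue~1 in Section~\ref{section::issues}, where at $\psi_0=0$ the influence function vanishes and the convergence rate changes from $n^{-1/2}$ to $n^{-1}$.
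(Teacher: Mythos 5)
Your treatment of the one-step estimator is correct and is essentially the argument the paper leaves implicit (the paper gives no separate proof of this corollary; it is meant to follow from Theorem~\ref{theorem::psi0} together with the cross-fitting one-step theory recalled at the start of Section~\ref{section::inference}). The decomposition $\hat\psi_0-\psi_0=(P_n-P)\phi(\cdot,\hat\gamma)-R$, the conditional-variance bound for the drift term using independence of the splits, and the transfer of the $o_P(n^{-1/4})$ rates into the factors $\|\hat p_X-p_X\|$, $\|\hat p_Z-p_Z\|$, $\|\hat\delta-\delta\|$ of the remainder are all right; your flagging of the density ratio $p(X)p(Z)/p(X,Z)$ as requiring an overlap/boundedness condition is a genuine hypothesis the paper itself only gestures at (``under appropriate conditions,'' and its footnote on the overlap problem), and your reading of the role of $\psi_0\neq 0$ matches the paper's Issue~1 discussion.

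The gap is the sentence ``the other case being analogous.'' For this functional the estimating-equation estimator is \emph{not} analogous. Since $\phi(u,\mu,p)=L(u,\mu,p)-2\psi$ is linear in $\psi$, solving $n^{-1}\sum_i\phi(U_i,\hat\gamma,\psi)=0$ gives $\hat\psi_{\mathrm{EE}}=(2n)^{-1}\sum_i L(U_i,\hat\gamma)$, and one line of algebra shows
\begin{equation*}
\hat\psi_{\mathrm{EE}}=\tfrac{1}{2}\bigl(\hat\psi_{\mathrm{OS}}+\psi_0(\hat\gamma)\bigr),
\end{equation*}
the average of the one-step estimator and the plug-in. Running your decomposition on this object (using $PL(\cdot,\hat\gamma)=\psi_0(\gamma)+\psi_0(\hat\gamma)-R$, which is exactly the identity established in the paper's remainder computation) yields
\begin{equation*}
\hat\psi_{\mathrm{EE}}-\psi_0=\tfrac{1}{2}(P_n-P)\phi(\cdot,\hat\gamma)+\tfrac{1}{2}\bigl(\psi_0(\hat\gamma)-\psi_0\bigr)-\tfrac{1}{2}R,
\end{equation*}
with a factor $\tfrac12$ on the Gaussian term and, more seriously, the plug-in drift $\psi_0(\hat\gamma)-\psi_0$, which is first order in $\|\hat\gamma-\gamma\|$ and hence only $o_P(n^{-1/4})$ under the stated hypotheses, not $o_P(n^{-1/2})$. (Concretely, with the densities known and only $\mu$ estimated, the bias of $\hat\psi_{\mathrm{EE}}$ contains $\int\int \delta\,(\hat\delta-\delta)\,p(x)p(z)\,dx\,dz$, a first-order term; this is the same phenomenon as estimating $\int p^2$ by $n^{-1}\sum_i\hat p(X_i)$ rather than by its one-step correction.) So the conclusion $\sqrt{n}(\hat\psi_{\mathrm{EE}}-\psi_0)\rightsquigarrow N(0,\sigma^2)$ does not follow by analogy; you would need either to recenter the estimating equation as $n^{-1}\sum_i\{L(U_i,\hat\gamma)-\psi_0(\hat\gamma)-\psi\}=0$, whose solution is exactly the one-step estimator, or to impose the additional (and here unrealistic) condition $\psi_0(\hat\gamma)-\psi_0=o_P(n^{-1/2})$. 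This imprecision is arguably inherited from the paper, which asserts the claim for both estimators without proof, but as written your argument covers only one of the ``two estimators above.''
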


In our implementation,
we estimate 
$p(x,z), p(x), p(z)$ with kernel density estimators.
We estimate integrals with respect to the densities
by sampling from the kernel estimators.
Specifically,
$$
\hat\mu_{*}(z) = \frac{1}{N}\sum_{j=1}^N \hat \mu(X_j^*,z)\ \ \ \ 
{\rm where\ } X_1^*,\ldots, X_N^* \sim \hat p(x).
$$
Similarly,
$\int (\mu(X,z) - \hat \mu_0(z))^2 p(z)$ is estimated by
$$
\frac{1}{N}\sum_j (\mu(X,Z_j^*) - \hat \mu_0(Z_j^*))^2 \ \ \ \ {\rm where\ } Z_1^*,\ldots, Z_N^* \sim \hat p(z)
$$
and
$\int (\mu(x,Z) - \hat \mu_0(Z))^2 p(x)$ is estimated by
$N^{-1}\sum_j (\mu(X_j^*,z) - \hat \mu_0(z))^2.$
Thus
\begin{align*}
\hat\psi_0 &=
\frac{1}{2n}\sum_i L(U_i,\hat\mu,\hat p)\\
&=
\frac{1}{nN}\sum_i \sum_j 
\left(\hat\mu(X_j^*,Z_i)- 
\frac{1}{N}\sum_{s=1}^N \hat \mu(X_s^*,z)\right)^2 +
\frac{1}{nN}\sum_i \sum_j 
\left(\hat\mu(X_i,Z_j^*)- 
\frac{1}{N}\sum_{s=1}^N \hat \mu(X_i,Z_s^*)\right)^2 \\
&\ +
\frac{2}{n}\sum_i 
\frac{ \hat p(X_i)\hat p(Z_i)}{ \hat p(X_i,Z_i)}
\left(\hat\mu(X_i,Z_i)- \frac{1}{N}\sum_j \hat\mu(X_j^*,Z_i)\right)
(Y_i - \hat \mu(X_i,Z_i)).
\end{align*}

\bigskip

{\bf Finite Sample Problems.}
In principle, $\hat\psi_0$ is fully efficient.
In practice, $\hat\psi_0$ can behave poorly
as we now explain.
One of the terms in the von Mises remainder is
$||\hat\mu_0(z)- \mu_0(z)||^2$.
Now
$\mu_0(z) =\int\mu(x,z)p(x)dx$.
When $X$ and $Z$ are highly correlated,
there will be a large set $A_z$ of $x$ values, where
there are no observed data and so
$\hat\mu_0(z)$ will be quite far from
$\mu_0(z)$ because
$\hat\mu(x,z)$ must suffer large bias or variance (or both)
over that region.
This is known as extrapolation error.
For this reason we now consider
alternative versions of $\psi_0$.\footnote{
Readers familiar with causal inference
will recognize that, formally,
$\mu_0(z)$ is the average treatment effect if
we think of $Z$ as a treatment and $X$ as a confounder.
But the role of treatment and confounder is switched with the treatment
being the multivariate vector $Z$.
The difficulty in estimating $\mu_0(z)$
when $X$ and $Z$ are highly correlated
is known as the overlap problem in causal inference \citep{d2021overlap}.
}


\subsection{Estimating $\psi_1$}

Recall that
$\psi_1 = \E[ (\mu(X,V)-\mu(V))^2]$
where $V=(Z_j:\ |\rho(X,Z_j)| \leq t)$ for some $t$.
We take 
$\rho(X,Z_j) = \sum_{i=1}^g |\rho(X_i,Z_j)|$
where
$\rho(X_i,Z_j)$ is the Pearson correlation.
We use $t = .5$ in our examples.
For simplicity 
we assume that the values
$\rho(X,Z_j)$ are distinct.
In this case
$P(\hat V = V)\to 1$
as $n\to \infty$ where
$\hat V=(Z_j:\ |\hat\rho(X,Z_j)| \leq t)$ and
the randomness of $\hat V$ can be ignored asymptotically
and $\psi_1$ can be estimated in the same way as $\psi_L$ with
$\hat V$ replacing $Z$.

\begin{lemma}\label{lemma::psi1}
If $||\hat\mu(x,v)-\mu(x,v)||=o_P(n^{-1/4})$ 
and $\psi_1 \neq 0$ then
$\sqrt{n}(\hat\psi_1 - \psi_1)\rightsquigarrow N(0,\tau^2)$.
\end{lemma}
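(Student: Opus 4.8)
The plan is to mirror the analysis of $\psi_L$ from Section \ref{section::psi}, with $V$ replacing $Z$, after first disposing of the randomness introduced by the data-driven selection of $\hat V$. First I would rewrite $\psi_1$ in residual form. Since $\mu(X,V)=\E[Y|X,V]$ and $\mu(V)=\E[Y|V]$, conditioning on $(X,V)$ shows that the cross term $\E[(Y-\mu(X,V))(\mu(X,V)-\mu(V))]$ vanishes by the tower property, so
$$
\psi_1 = \E[(\mu(X,V)-\mu(V))^2] = \E[(Y-\mu(V))^2] - \E[(Y-\mu(X,V))^2].
$$
This is precisely the $\psi_L$ functional with $V$ in place of $Z$, and its plugin estimator is $\hat\psi_1 = n^{-1}\sum_i (Y_i-\hat\mu(\hat V_i))^2 - n^{-1}\sum_i(Y_i-\hat\mu(X_i,\hat V_i))^2$.

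Next I would handle selection. Because $\hat\rho(X,Z_j)\to\rho(X,Z_j)$ in probability and the population correlations are distinct --- so none equals the threshold $t$ --- there is a margin separating the retained from the discarded coordinates, giving $\P(\hat V=V)\to 1$. On the event $\{\hat V=V\}$ the estimator $\hat\psi_1$ coincides with the fixed-$V$ LOCO estimator, and since the complementary event has probability $o(1)$ it contributes nothing to any limit in distribution. It therefore suffices to analyze $\hat\psi_1$ as though $V$ were known.

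The core argument then reuses the double robustness of the quadratic plugin. Using sample splitting (fit $\hat\mu$ on ${\cal D}_0$, evaluate on ${\cal D}_1$) and expanding each squared residual about the truth, the linear-in-error cross terms $\E[(Y-\mu(V))(\hat\mu(V)-\mu(V))|V]$ and $\E[(Y-\mu(X,V))(\hat\mu(X,V)-\mu(X,V))|X,V]$ vanish by conditional-mean-zero, leaving a remainder of order $||\hat\mu-\mu||^2 = o_P(n^{-1/2})$ under the stated rate, while the sample-splitting independence makes the centered empirical-process term negligible. The leading term is the i.i.d. average $n^{-1}\sum_i\{(Y_i-\mu(V_i))^2-(Y_i-\mu(X_i,V_i))^2-\psi_1\}$, so the Lindeberg--L\'evy CLT yields $\sqrt{n}(\hat\psi_1-\psi_1)\rightsquigarrow N(0,\tau^2)$ with $\tau^2 = {\rm Var}[(Y-\mu(V))^2-(Y-\mu(X,V))^2]$. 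The hypothesis $\psi_1\neq 0$ keeps us away from the degenerate null of Issue 1, guaranteeing $\tau^2>0$ and the regular $\sqrt{n}$ rate. The main obstacle is the selection step: one must make the margin argument rigorous and verify that conditioning on $\{\hat V=V\}$ does not distort the CLT; once the selection is frozen at $V$ with probability tending to one, the rest is inherited from the $\psi_L$ analysis and is routine.
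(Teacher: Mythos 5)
Your proof is correct, and its analytic core coincides with the paper's: both reduce $\hat\psi_1$ to the $\psi_L$-type plugin in the difference-of-squared-residuals form, expand each average about the truth, kill the linear cross terms by conditional mean zero (using the sample split so $\hat\mu$ is independent of the evaluation fold), bound the quadratic remainder by $||\hat\mu-\mu||^2=o_P(n^{-1/2})$ under the assumed $n^{-1/4}$ rate, and finish with the CLT. Where you genuinely differ is the treatment of the selected set. You freeze the selection via $\P(\hat V=V)\to 1$ and argue on that event --- which is exactly the route the paper sketches in the main text just before the lemma, but \emph{not} what its appendix proof does. The appendix instead carries the term $\mu(V_i)-\mu(\hat V_i)$ through a mean-value expansion $-(\hat V_i-V_i)^T\nabla\mu(\tilde V_i)$ and absorbs the resulting pieces into the $o_P(n^{-1/2})$ remainder. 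Your version is arguably cleaner: the mean-value step is awkward for a discrete selection, since when $\hat V\neq V$ the two vectors consist of different coordinates (possibly of different dimensions) and a gradient at a point ``between'' them is not well defined, whereas on $\{\hat V=V\}$ those terms vanish identically; since an event of probability tending to one cannot affect a limit in distribution, your reduction to the fixed-$V$ analysis is rigorous. Two small points to tighten. First, distinctness of the $\rho(X,Z_j)$ does not by itself keep them away from the threshold; the margin argument requires $|\rho(X,Z_j)|\neq t$ for every $j$ (the paper's ``distinct values'' assumption carries the same imprecision, so you have reproduced rather than introduced the gap). Second, the lemma asserts convergence to $N(0,\tau^2)$ ``for some $\tau^2$,'' so you need not derive $\tau^2>0$ from $\psi_1\neq 0$ --- and indeed $\psi_1\neq 0$ alone does not strictly force nondegeneracy --- though your identification $\tau^2={\rm Var}\bigl[(Y-\mu(V))^2-(Y-\mu(X,V))^2\bigr]$ is exactly the variance the paper's leading i.i.d.\ average produces.
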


\subsection{Estimating $\psi_2$}

Consider the partially linear model
$Y = \beta^T X +  f(Z) + \epsilon$.
Then
$\mu_0(z) = \int \mu(x,z) p(x) dx = \beta^T m_X + f(z)$
where $m_X = \E[X]$ and so
$$
\psi_2 \equiv \int \int (\mu(x,z) -\mu_0(z))^2 p(x)p(z) dx dz = \beta^T \Sigma_X \beta
$$
and
$\beta = \E[ (Y-\mu(Z))(X-\nu(Z))]/\E[(X-\nu(Z))^2]$.

The efficient influence function
for $\psi_2$ is
$$
\phi = 2 \beta^T \Sigma_X \phi_\beta + \beta^T ( (X-m_X)(X-m_X)^T ) \beta - \psi_2
$$
where
$$
\phi_\beta = \Sigma_X^{-1} (X-\nu(Z)) \Bigl\{ (Y-\mu(Z)) - (X-\nu(Z))^T \beta) \Bigr\}
$$
and we have the von Mises expansion
$\psi_2 (\mu,\nu,\beta,\Sigma_X)  = 
\psi_2(\overline{\mu},\overline{\nu},\overline{\beta},\overline{\Sigma}_X) + 
\int \phi(u,\overline{\mu},\overline{\nu},\overline{\beta},\overline{\Sigma}_X) dP + R$
where the remainder $R$ satisfies
\begin{align*}
R&=
O(||\mu(\overline{P})-\mu(P)|| \times ||\nu(\overline{P})-\nu(P)||) +
O(||{\rm vec}(\Sigma_X(\overline{P})) - {\rm vec}(\Sigma_X(P))||^2) \\
&\ \ \ +
O(||\beta(\overline{P})-\beta(P)||^2)+
O(||\beta(\overline{P})-\beta(P)|| \times ||{\rm vec}(\Sigma_X(\overline{P})) - {\rm vec}(\Sigma_X(P))||).
\end{align*}
We omit the calcuation of the influence function and remainder
as they are standard.
Hence, if
$||\mu(\overline{P})-\mu(P)|| \times ||\nu(\overline{P})-\nu(P)||) = o(n^{-1/2})$,
$||\beta(\overline{P})-\beta(P)|| = o(n^{-1/4})$,
and
$||{\rm vec}(\Sigma_X(\overline{P})) - {\rm vec}(\Sigma_X(P))|| = o(n^{-1/4})$,
then $\sqrt{n}R = o(1)$.
It is easy to verify that
$||\beta(\overline{P})-\beta(P)|| = 
O(||\mu(\overline{P})-\mu(P)|| \times ||\nu(\overline{P})-\nu(P)||)$
and so
$\psi_2$ satisfies the double robustness property, namely,
that the bias involves the product of two quantities.
It suffices to estimate either $\mu$ or $\nu$ accurately
to get a consistent estimator.

The one-step estimator is given by
$$
\hat\psi_2 = \frac{1}{n}\sum_i \hat\beta^T (X_i - \hat\mu(Z_i))(X_i - \hat\mu(Z_i))^T \hat\beta +
\frac{2}{n}\sum_i \hat\beta^T \hat\Sigma_X \phi_\beta(X_i,Z_i)
$$
where
$$
\hat\beta=
\Biggl\{\frac{1}{n}\sum_i (X_i - \hat\nu(Z_i))(X_i - \hat\nu(Z_i))^T\Biggr\}^{-1}
\frac{1}{n}\sum_i (X_i - \hat \nu(Z_i))(Y_i - \hat \mu(Z_i))
$$
and the sums are over ${\cal D}_1$.

\subsection{Estimating $\psi_3$}
\label{section::est3}

Consider the
partially linear model with
interactions:
$$
Y = \beta^T X + \sum_{j=1}^g \sum_{k=1}^h \gamma_{jk} X_j Z_k + f(Z) + \epsilon.
$$
Define
$$
\Theta =
\left[
\begin{array}{cccc}
\beta_1 & \gamma_{11} & \cdots & \gamma_{1h}\\
\vdots  & \vdots  & \vdots  & \vdots\\
\beta_g & \gamma_{g1} & \cdots & \gamma_{gh}
\end{array}
\right],\ \ \ \ 
\mathbb{W} =
\left[
\begin{array}{cccc}
X_1 & X_1 Z_1 & \cdots & X_1 Z_h\\
\vdots  & \vdots  & \vdots  & \vdots\\
X_g  & X_{g}Z_1 & \cdots & X_g Z_h
\end{array}
\right] =
X \ \tilde{Z}^T
$$
where $\tilde Z^T = (1, Z^T)$.
Then we can write
$$
Y = \theta^T W + f(Z)+\epsilon
$$
where 
$\theta = {\rm vec}(\Theta)$ and
$W = {\rm vec}(\mathbb{W}) = {\rm vec}(X \tilde Z^T) = \tilde Z \otimes X$.

\begin{lemma}\label{lemma::psi3}
We have
$$
\theta = 
\Biggl\{
\E\bigl[\tilde Z \tilde Z^T \otimes (X-\nu(Z))(X-\nu(Z))^T \bigr]\Biggr\}^{-1}
\E\Biggl[ \biggl(Y-\mu(Z)\biggr)\  \biggl(\tilde Z \otimes (X-\nu(Z))\biggr)\Biggr]
$$
and under this model, $\psi_0$ is equal to
$\psi_3 = \theta^T \Omega \theta$ where
$$
\Omega = \Sigma_X \otimes \E[ \tilde Z \tilde Z^T] =
\Sigma_X \otimes \left(
\begin{array}{cc}
1 & m_Z^T \\
m_Z & \Sigma_Z + m_Z m_Z^T
\end{array}
\right),
$$
$m_Z = \E[Z]$ and
$\Sigma_Z = {\rm Var}[Z]$.
The efficient influence function
for $\psi_3$ is
\begin{equation}\label{eq::phi3}
\phi = 2 \theta^T \Omega \phi_\theta + \theta^T \mathring{\Omega} \theta-\psi_3
\end{equation}
where
$$
\phi_\theta = \Bigl\{\E[ R_{XZ}R_{XZ}^T]\Bigr\}^{-1}
R_{XZ}(R_Y - R_{XZ}^T \theta),
$$
$R_Y = Y-\mu(Z)$,
$R_{XZ} = {\rm vec}[(X-\nu(Z)) \tilde Z^T]$,
$$
\mathring{\Omega} =
\Biggl\{
[(X - m_X)(X - m_X)^T -\Sigma_X] \otimes
\left[
\begin{array}{cc}
1 & m_Z^T\\
m_Z & \Gamma
\end{array}
\right]
\Biggr\} +
\Biggl\{
\Sigma_X \otimes
\left[
\begin{array}{cc}
0 & (Z-m_Z)^T\\
Z-m_Z & \mathring{\Gamma}
\end{array}
\right]
\Biggr\},
$$
(the influence function of $\Omega$)
$\Gamma = \Sigma_Z + m_Z m_Z^T$,
and
$$
\mathring{\Gamma} = (Z - m_Z)(Z - m_Z)^T - \Sigma_Z + m_Z (Z-m_Z)^T + (Z-m_Z) m_Z^T
$$
(the influence function of $\Gamma$).
\end{lemma}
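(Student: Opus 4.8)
The plan is to exploit the fact that, after the reparametrization $W=\tilde Z\otimes X$ introduced above, the interaction model $Y=\theta^T W+f(Z)+\epsilon$ is an ordinary partially linear model in the generated regressor $W$ with nuisance $f(Z)$. All three assertions then follow from standard partially linear machinery together with Kronecker bookkeeping, so I would organize the argument into the three claims: (i) the partialling-out formula for $\theta$; (ii) the identity $\psi_0=\psi_3=\theta^T\Omega\theta$ under the model; and (iii) the efficient influence function \eqref{eq::phi3}.

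For (i) and (ii) I would use the partially linear structure directly. Because $\tilde Z=(1,Z^T)^T$ is $Z$-measurable, $\E[W\mid Z]=\E[\tilde Z\otimes X\mid Z]=\tilde Z\otimes\nu(Z)$, so the residualized regressor is $R_{XZ}:=W-\E[W\mid Z]=\tilde Z\otimes(X-\nu(Z))=\mathrm{vec}[(X-\nu(Z))\tilde Z^T]$; subtracting $\E[\cdot\mid Z]$ from the model kills $f(Z)$ and gives $Y-\mu(Z)=R_{XZ}^T\theta+\epsilon$ with $\E[\epsilon\mid X,Z]=0$, whence $\theta=\{\E[R_{XZ}R_{XZ}^T]\}^{-1}\E[R_{XZ}(Y-\mu(Z))]$, and the mixed-product rule $(\tilde Z\otimes a)(\tilde Z\otimes a)^T=(\tilde Z\tilde Z^T)\otimes(aa^T)$ with $a=X-\nu(Z)$ rewrites this as the stated formula. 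For (ii), substituting $\mu(x,z)=\theta^T(\tilde z\otimes x)+f(z)$ gives $\mu_0(z)=\theta^T(\tilde z\otimes m_X)+f(z)$ and hence $\mu(x,z)-\mu_0(z)=\theta^T(\tilde z\otimes(x-m_X))$, which is linear in $\theta$ and free of $f$; squaring and integrating against $p(x)p(z)$, the factorized measure together with bilinearity of the Kronecker product splits the expectation into an $\E[\tilde Z\tilde Z^T]$ factor and a $\Sigma_X$ factor, yielding $\psi_0=\theta^T\Omega\theta=\psi_3$. Here I would keep the order of the Kronecker factors (equivalently, the vectorization convention) consistent between $\theta$ and $\Omega$, invoking the commutation matrix if necessary.

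For (iii) I would obtain the influence function by the delta method applied to the smooth map $(\theta,\Omega)\mapsto\theta^T\Omega\theta$, writing $\phi=2\theta^T\Omega\phi_\theta+\theta^T\mathring\Omega\theta-\psi_3$. The term $\phi_\theta$ is the Neyman-orthogonal partialling-out influence function of the partially linear coefficient: differentiating the estimating equation $\E[R_{XZ}(R_Y-R_{XZ}^T\theta)]=0$ in $\theta$ gives the Jacobian $-\E[R_{XZ}R_{XZ}^T]$, so $\phi_\theta=\{\E[R_{XZ}R_{XZ}^T]\}^{-1}R_{XZ}(R_Y-R_{XZ}^T\theta)$ with $R_Y=Y-\mu(Z)$, and orthogonality with respect to the nuisances $\mu(\cdot),\nu(\cdot)$ is what downgrades their required accuracy to $o_P(n^{-1/4})$. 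The term $\mathring\Omega$ is the influence function of $\Omega=\Sigma_X\otimes\E[\tilde Z\tilde Z^T]$, which I would compute by the product rule for Kronecker products, inserting the textbook influence functions $(X-m_X)(X-m_X)^T-\Sigma_X$ for $\Sigma_X$, $Z-m_Z$ for $m_Z$, and $\mathring\Gamma=ZZ^T-\Gamma$ for $\Gamma=\E[ZZ^T]$ into $\mathring\Omega=[\text{IF of }\Sigma_X]\otimes\E[\tilde Z\tilde Z^T]+\Sigma_X\otimes[\text{IF of }\E[\tilde Z\tilde Z^T]]$, which reproduces the displayed two-block expression.

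The main obstacle is the influence-function calculation in (iii), and within it two things demand care. First, one must verify that $\phi$ is genuinely the efficient gradient — that the proposed score lies in the tangent space and equals the pathwise derivative of $\psi_3$ along every smooth submodel — rather than merely a valid but inefficient influence function; for the coefficient $\theta$ this rests on the known efficiency of the partialling-out score for the partially linear model. Second, all of the algebra lives in $g(h+1)$-dimensional Kronecker/vectorized space, so the bookkeeping that keeps $\theta$, $\Omega$, $\phi_\theta$ and $\mathring\Omega$ in a single consistent ordering, via repeated use of $(A\otimes B)(C\otimes D)=(AC)\otimes(BD)$ and the $\mathrm{vec}$ identities, is where the derivation is most error-prone even though each individual step is routine.
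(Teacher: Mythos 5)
Your proposal is correct and takes essentially the same route as the paper: the paper's appendix proves only the claim $\psi_0=\theta^T\Omega\theta$, by exactly your substitute-and-expand computation under the product measure (carried out in coordinates, writing $\mu(x,z)=\theta_0^TX+\sum_{j}\theta_j^TXZ_j$, rather than in Kronecker form), while the partialling-out formula for $\theta$ and the influence-function calculation are omitted as standard and are correctly supplied by your conditional-expectation/Neyman-orthogonality argument and your delta-method-plus-product-rule computation of $\mathring{\Omega}$ (your simplification $\mathring{\Gamma}=ZZ^T-\Gamma$ agrees with the paper's expanded expression). Your caution about Kronecker ordering is warranted and, if anything, sharper than the paper itself: with the convention $W={\rm vec}(X\tilde Z^T)=\tilde Z\otimes X$ fixed by $\theta={\rm vec}(\Theta)$, the expansion yields $\theta^T\bigl\{\E[\tilde Z\tilde Z^T]\otimes\Sigma_X\bigr\}\theta$, so consistency with $\theta$ requires the factors in the displayed $\Omega$ to appear in that order (equivalently, a commutation-matrix adjustment), exactly the bookkeeping point you flag.
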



Then
$\psi_3 (u,\theta,\Omega)  = \psi_3(u,\overline{\theta},\overline{\Omega}) + 
\int \phi(u,\overline{\theta},\overline{\Omega}) dP(u) + R$
where the remainder $R$ satisfies
\begin{align*}
R&=
O(||\theta(\overline{P})-\theta(P)||^2)+
O(||{\rm vec}(\Omega(\overline{P})) - {\rm vec}(\Omega(P))||^2) \\
&\ \ \ +
O(||\theta(\overline{P})-\theta(P)|| \times ||{\rm vec}(\Omega(\overline{P})) - {\rm vec}(\Omega(P))||).
\end{align*}
Thus if
$||\theta(\overline{P})-\theta(P)|| = o(n^{-1/4})$ and
$||{\rm vec}(\Omega(\overline{P})) - {\rm vec}(\Omega(P))|| = o(n^{-1/4})$ 
then $\sqrt{n}R = o(1)$.
Again, we have the double robustness property.

\bigskip

The sample estimate of $\theta$ is
$\hat\theta = (\mathbb{R}_{XZ}^T\mathbb{R}_{XZ})^{-1}\mathbb{R}_{XZ}^T \mathbb{R}_Y$
where
the $i^{\rm th}$ row of
$\mathbb{R}_{XZ}$ is
${\rm vec}[(X_i - \hat\nu(Z_i)) \tilde Z_i^T]$
and $\mathbb{R}_Y(i) = Y_i - \hat\mu(Z_i)$.
Let
$\hat\Omega$ be the sample version of $\Omega$.
The one-step estimator is
$$
\hat\psi_3 = 
\frac{1}{n}\sum_i \hat\theta^T \hat\phi_\Omega (U_i) \hat\theta +
\frac{2}{n}\sum_i \hat\theta^T \hat\Omega \phi_\theta(U_i)
$$
where the sums are over ${\cal D}_1$.

\subsection{Confidence Intervals}
\label{section::confidence}

Now we describe the construction of the confidence intervals
using a method we refer to as $t$-Cross.
Let $\psi$ denote a generic parameter.
We combine two ideas:
cross-fitting 
\citep{newey2018cross}
and $t$-inference \citep{ibragimov2010t}.
Here are the steps:

\begin{enumerate}
\item Divide the data into $B$ disjoint sets
${\cal D}_1,\ldots, {\cal D}_B$;
we take $B=5$ in the examples.

\item 
Estimate the nuisance functions using
all the data except ${\cal D}_j$ and compute
$\hat\psi_j$ on ${\cal D}_j$.
\item Let $\overline{\psi} = B^{-1}\sum_{j=1}^B \hat\psi_j$.
When $\psi\neq 0$, each $\hat\psi_j$ is asymptotically Normal so that
$\overline{\psi}$ is asymptotically $t_{B-1}$.
\item
The confidence interval is
$$
\overline{\theta}\pm t_{B-1,\alpha/2}\  {\rm se}
$$
where
${\rm se}^2 = (s^2/B + c^2/n)$
where $s^2 = (B-1)^{-1}\sum_{j=1}^B (\hat\theta_j - \overline{\theta})^2$.
\end{enumerate}

Note that
$s^2$ is an unbiased estimate of the variance
of $\hat\psi$ which does not depend on the
accuracy of the estimated influence function.

\section{Simulations}
\label{section::simulations}

In this section,
we compare the behavior of the different parameters
in some synthetic examples.
For each example,
we estimate all the parameters
$\psi_L,\psi_0,\psi_1,\psi_2,\psi_3$.
To estimate the parameters
we need to estimate the nuisance functions
$\mu(z)$ and $\nu(z)$.
As mentioned above, we consider
three approaches to estimating these functions:
linear models,
additive models and random forests.
For the additive models
we use the R package \texttt{mgcv}.
For random forests
we use the R package \texttt{grf}.
We always use the default settings making no attempt
to tune the methods to achieve good coverage.

\bigskip

{\bf Example 1.}
We start with a very simple scenario
where
$Y = 2X + \epsilon$,
$\epsilon \sim N(0,1)$,
$Z_1 = \delta X + \xi$,
$\xi \sim N(0,1)$,
and
$(Z_2,\ldots,Z_5)\sim N(0,I)$.
Figure \ref{fig::example1}
shows the coverage as a function
of the correlation between $X$ and $Z_1$.
As expected, $\psi_L$ has poor coverage
as the correlation increases.
The parameter $\psi_0$ partially corrects the correlation bias
while the other parameters do a much better job.

\bigskip

{\bf Examples 2-5.}
Now we consider four multivariate examples.
In each case, $n=10,000$, $h=5$ and
$\epsilon \sim N(0,1)$.
The distributions are defined as follows:

{\em Example 2:}
$X$ is standard Normal,
$Z_1 = X + N(0,.4^2)$,
$(Z_2,\ldots,Z_h)$ is standard multivariate Normal.
The regression function is
$Y = 2 X^3 +  \epsilon$.

{\em Example 3:}
Here,
$Z\sim N(0,I)$,
$X_1 = 2 Z_1 + \epsilon_1$,
$X_2 = 2 Z_2 + \epsilon_2$,
$Y = 2 X_1 X_2 + \epsilon$
where
$\epsilon,\epsilon_1,\epsilon_2 \sim N(0,1)$.

{\em Example 4:}
Let $X\sim {\rm Unif}(-1,1)$,
$Z\sim {\rm Unif}(-1,1)$,
and
$Y = X^2 (X+ (7/5))+ (25/9)Z^2 + \epsilon$.
This example is from
\cite{williamson2021nonparametric}.
Our coverage for $\psi_L$ is similar but slightly less than that in
\cite{williamson2021nonparametric}
but we are using a different nonparametric estimator.

{\em Example 5:}
$X\sim N(0,1)$,
$Z_1 = X + N(0,.4^2)$
$(Z_2,\ldots,Z_d)\sim N(0,I)$
and
$Y = 2 X^2 + X Z_1 + \epsilon$.

\bigskip

In examples 2,4 and 5,
we replaced $X$ with 
orthogonal polynomials
$b_1(X),b_2(X), b_3(X)$.

The results 
from 100 simulations are summarized in Table \ref{table::table2}.
Figure \ref{fig::examples} shows the average
of the left and right endpoints of the confidence intervals.
The first thing to notice is that no method does
uniformly well.
Estimating variable importance well
is surprisingly difficult.
Generally, we find that $\psi_3$ works best.
However, it does poorly in two cases:
in Example 5,  with linear regressions,
and in Example 2 using random forests.
$\psi_0$ rarely does well. Apparently,
the functional is too difficult to estimate nonparametrically.
$\psi_1$ works well in a few cases,
but is not reliable enough in general.
Similar behavior occurs for $\psi_2$.
Except for a few cases, $\psi_L$ never does well.
This is not unexpected due to the correlation bias.
However, it should be noted that
these methods
are likely all doing well
in the sense of covering
the value of $\psi$ in the projected model.
For example, when using
linear models for $\mu$ and $\nu$,
we are really estimating
the value of $\psi$ for the projection
of the distribution onto the space of linear models.
The parameter estimate may capture useful information
even if it is not estimating $\psi_0$.

\begin{figure}
\begin{center}
\begin{tabular}{c}
\includegraphics[scale=.5]{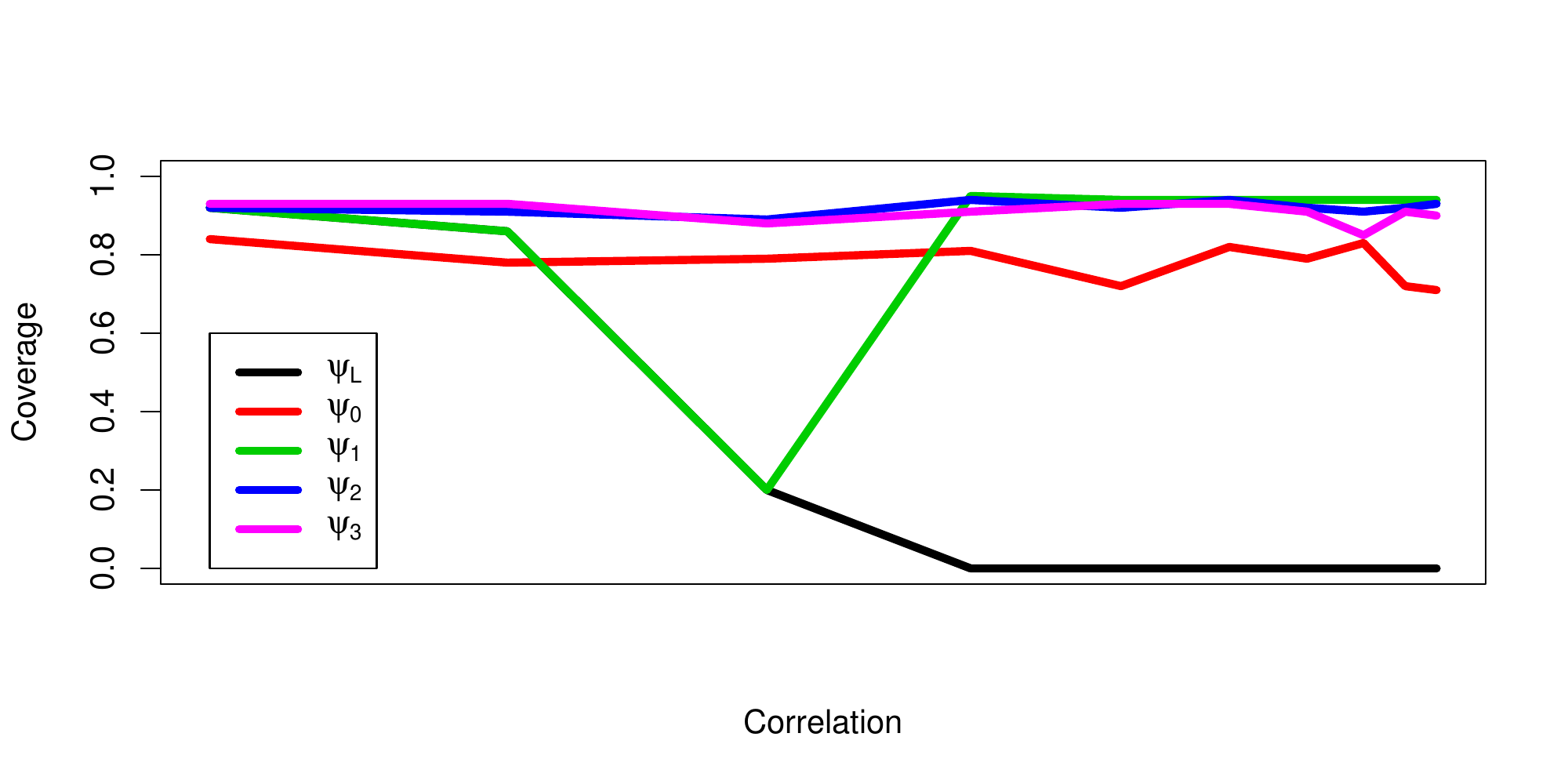}\\
\includegraphics[scale=.5]{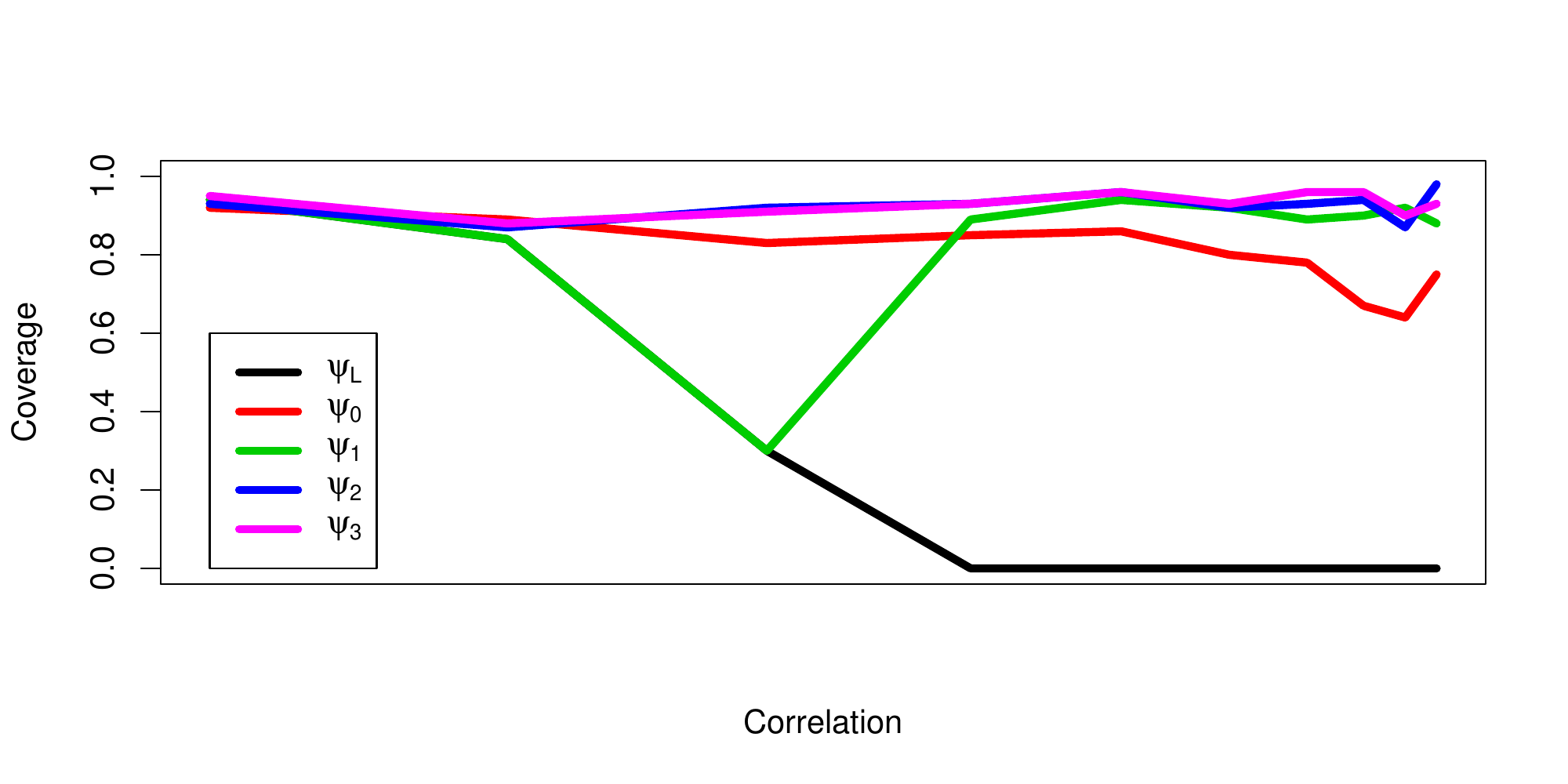}\\
\includegraphics[scale=.5]{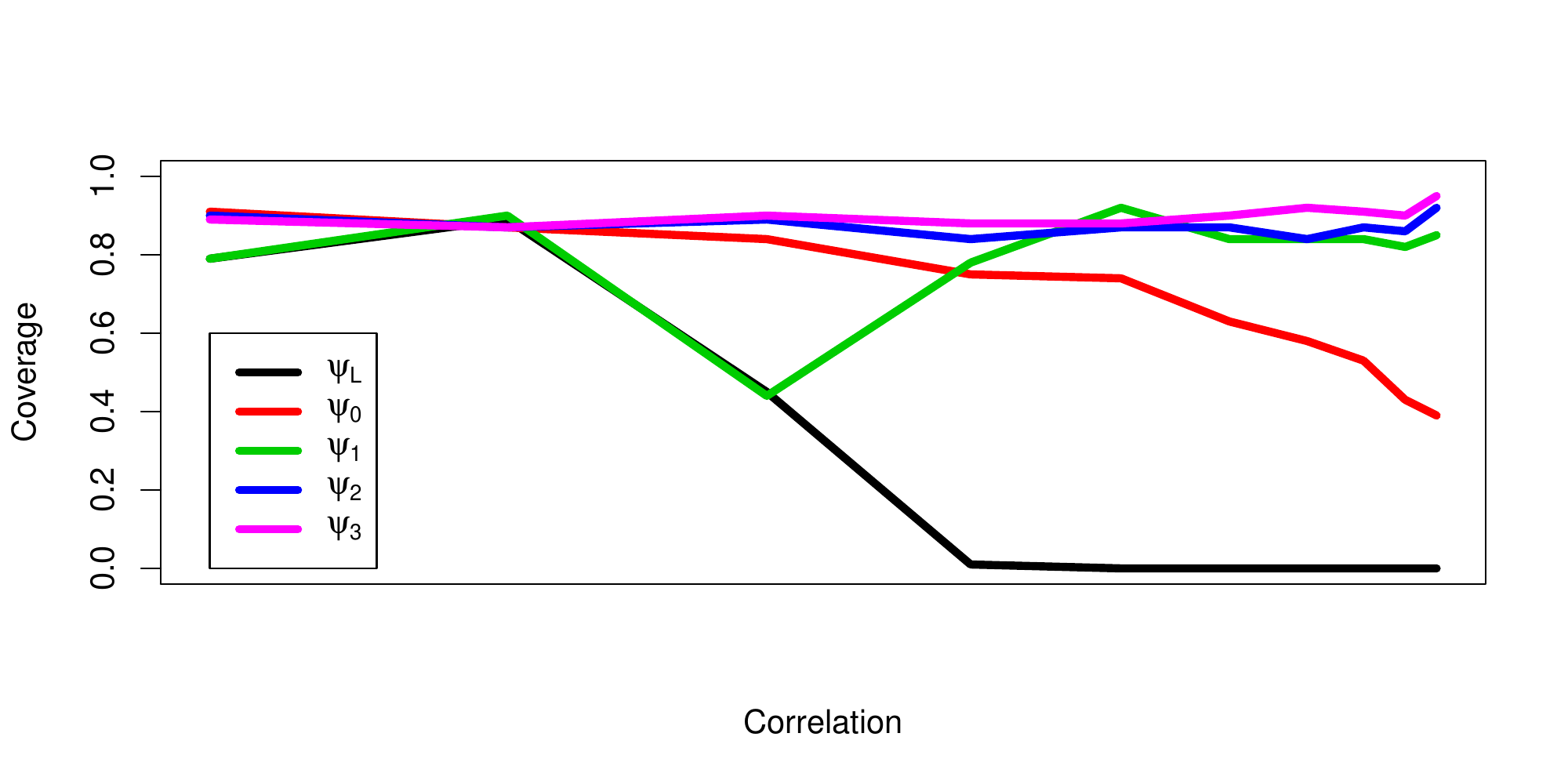}
\end{tabular}
\end{center}
\caption{Example 1: Coverage
as a function of correlation.
Top left: linear. Top right: additive. Bottom left: forests.}
\label{fig::example1}
\end{figure}

\begin{figure}
\begin{center}
\begin{tabular}{ccc}
\includegraphics[scale=.35]{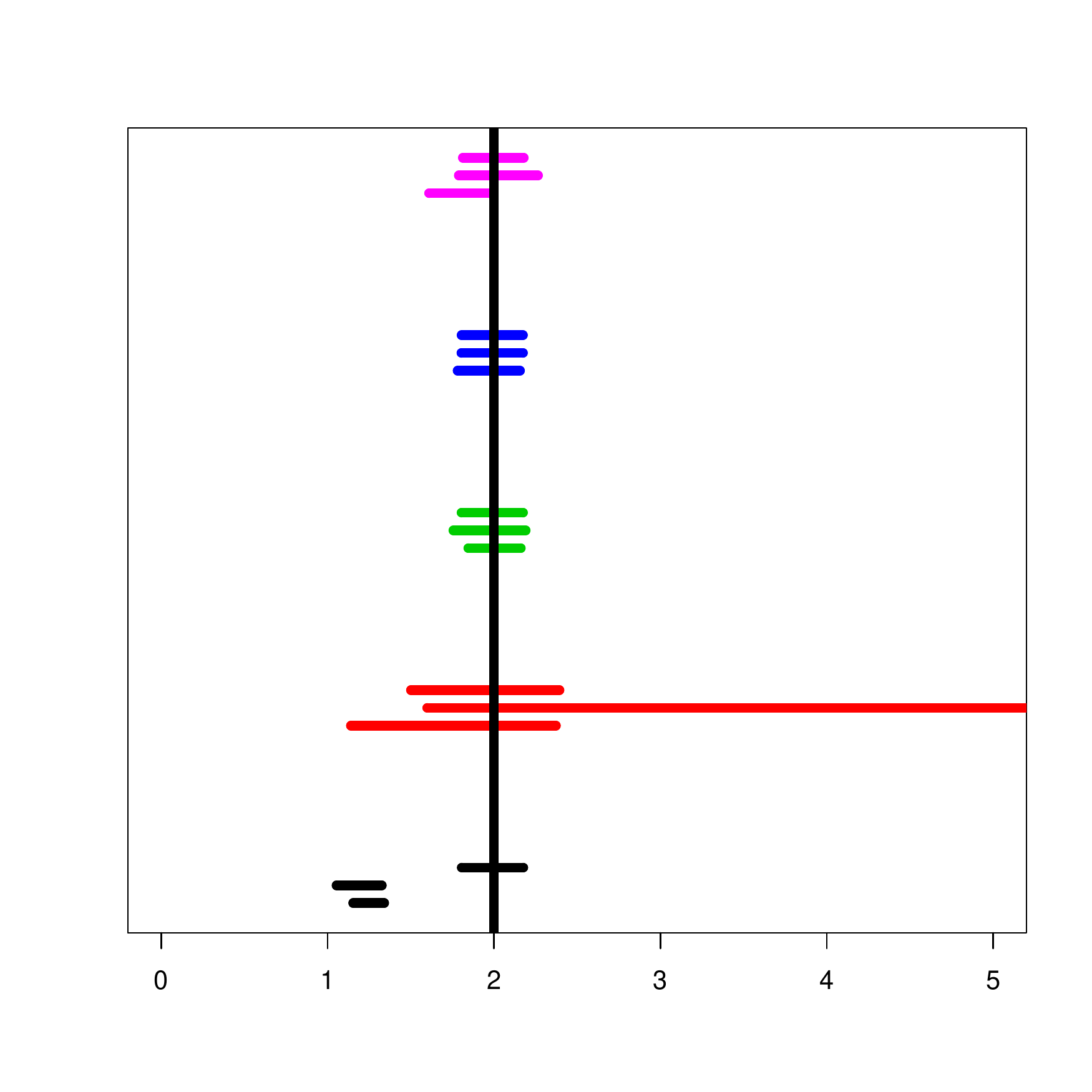}&
\includegraphics[scale=.35]{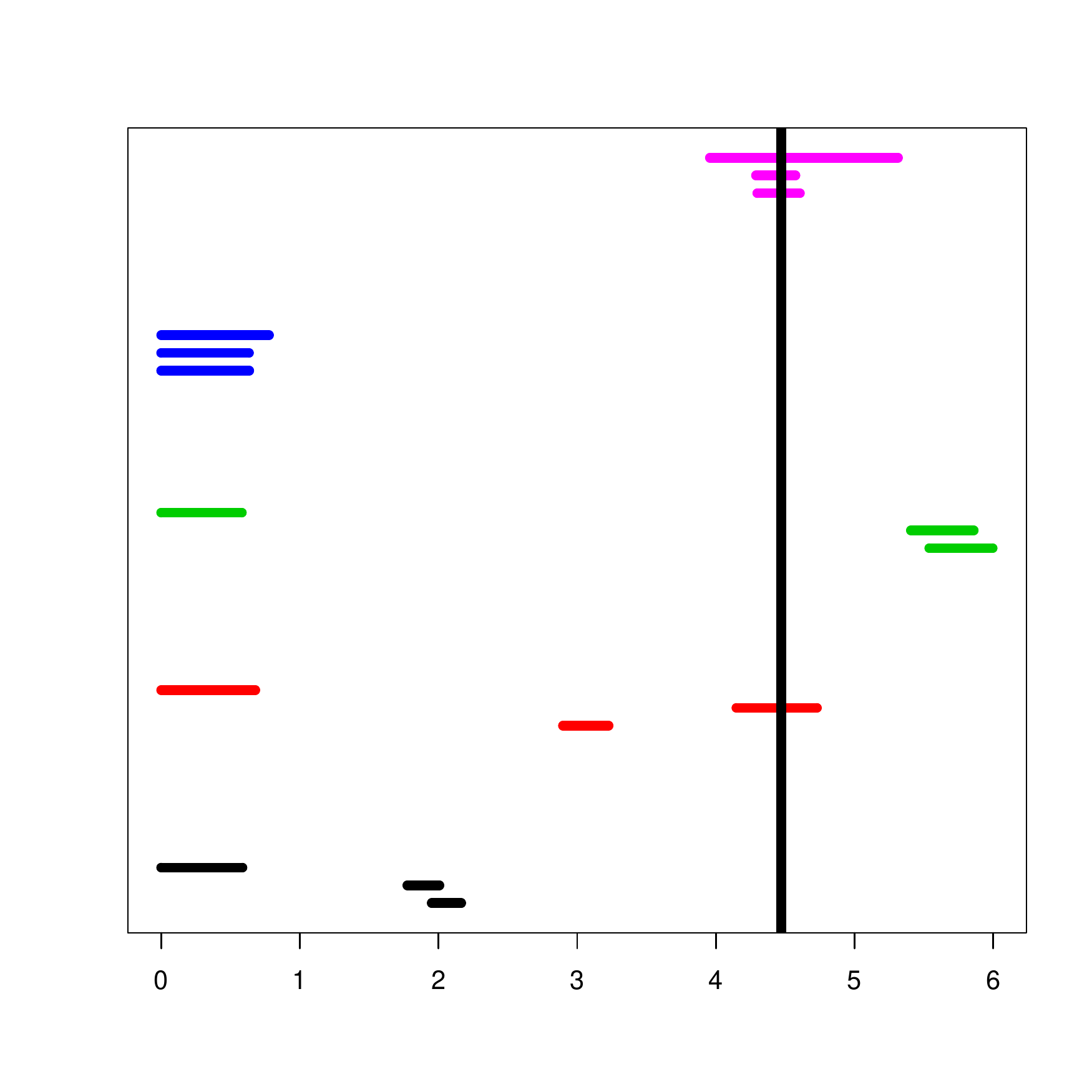}\\
\includegraphics[scale=.35]{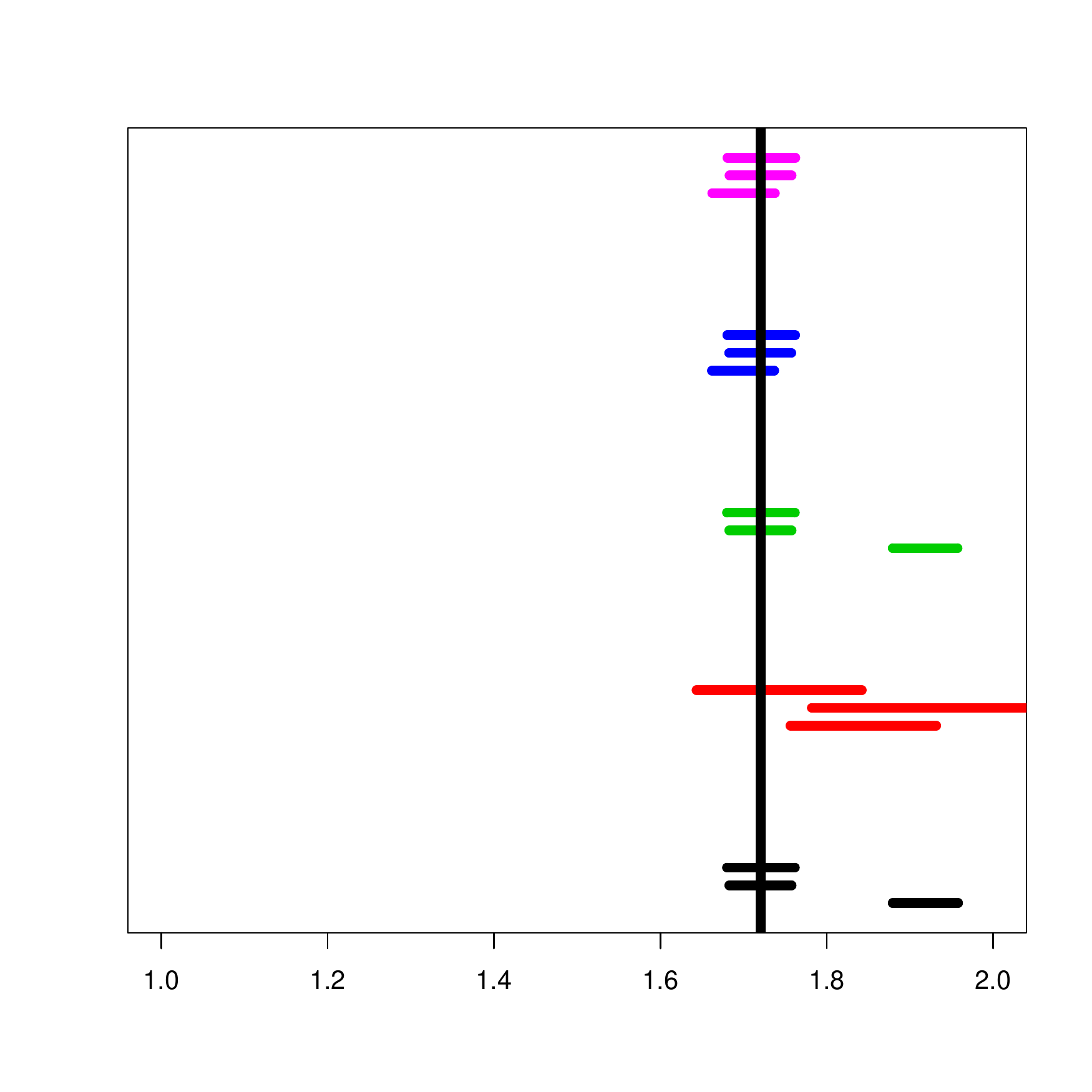}&
\includegraphics[scale=.35]{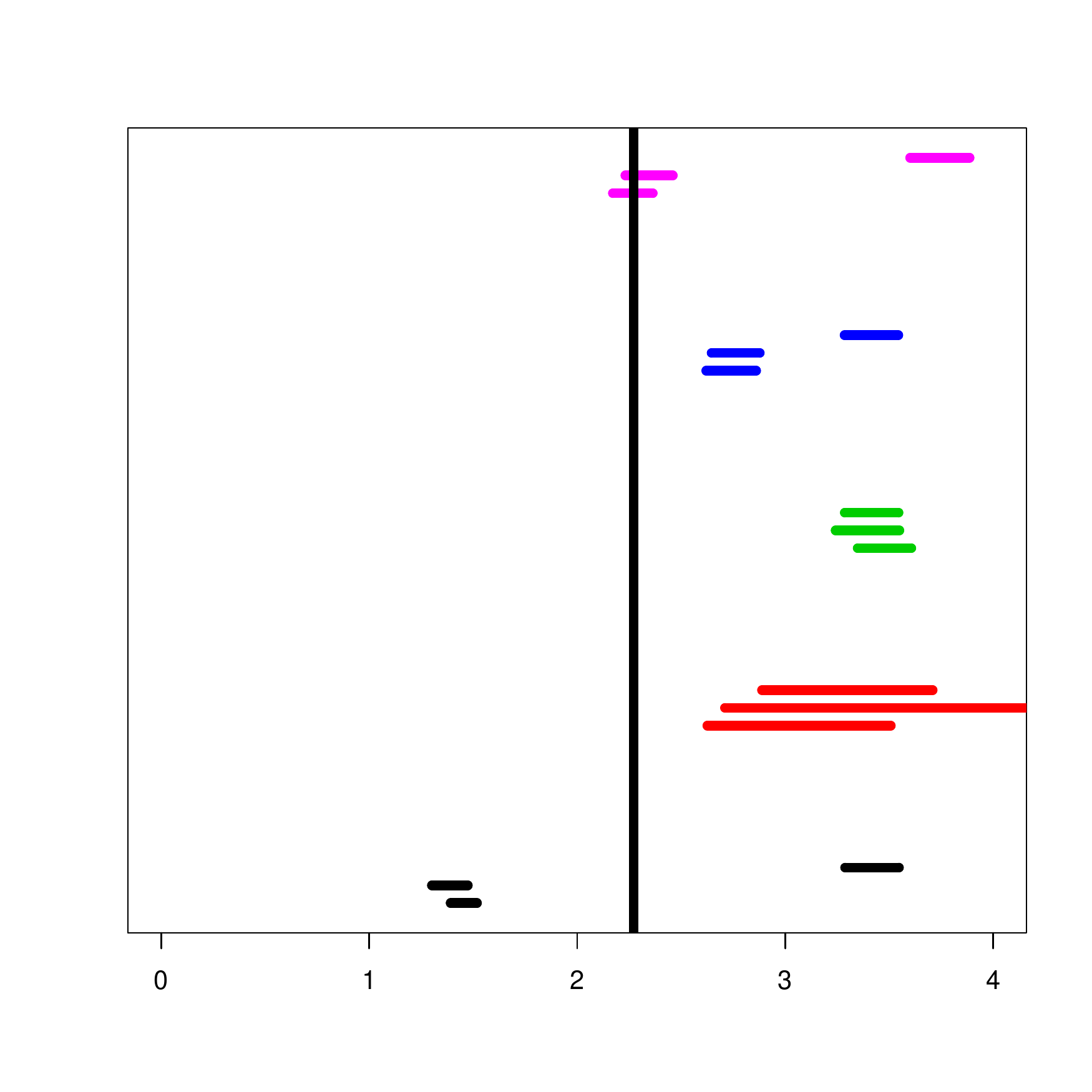}&
\includegraphics[scale=.35]{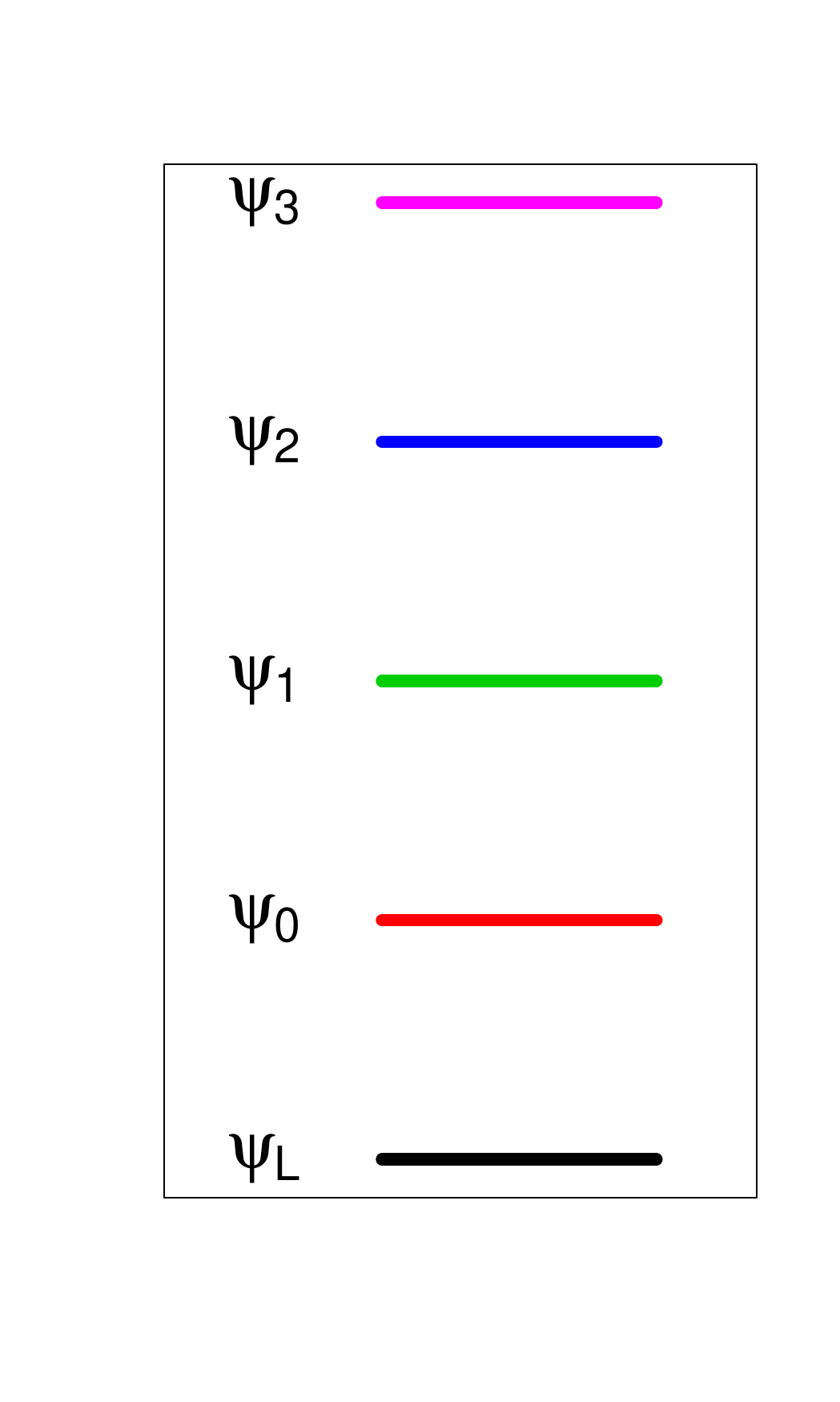}
\end{tabular}
\end{center}
\caption{The average of the left and right endpoints of the confidence intervals
over 100 simulations for Examples 2,3,4,5.
For each group of three line segments, the top is based on linear models,
the middle is based on additive models
and the bottom is based on random forests.}
\label{fig::examples}
\end{figure}

\begin{table}
\begin{center}
\begin{tabular}{c|cccccc|cccccc|ccccc}
 &   \multicolumn{5}{c}{Linear} & & \multicolumn{5}{c}{Additive} & & \multicolumn{5}{c}{Forest}\\
          &  $\psi_L$ & $\psi_0$ & $\psi_1$ & $\psi_2$ & $\psi_3$ &  & 
$\psi_L$ & $\psi_0$ & $\psi_1$ & $\psi_2$ & $\psi_3$ &  & 
$\psi_L$ & $\psi_0$ & $\psi_1$ & $\psi_2$ & $\psi_3$\\  
\hline
Example 2 &   1 & 0.84 &   1 &   1 & 1.00 & &  0.00 & 0.79 & 1.00 & 1.00 & 0.97 & &  0.00 & 0.75 & 1.00 & 0.99 & 0.30\\
Example 3 &   0 & 0.00 &   0 &   0 & 0.99 & &  0.00 & 0.88 & 0.00 & 0.00 & 0.92 & &  0.00 & 0.00 & 0.00 & 0.00 & 0.91\\
Example 4 &   1 & 0.87 &   1 &   1 & 1.00 & &  0.98 & 0.20 & 0.98 & 0.98 & 0.98 & &  0.00 & 0.21 & 0.00 & 0.86 & 0.85\\
Example 5 &   0 & 0.01 &   0 &   0 & 0.00 & &  0.00 & 0.83 & 0.00 & 0.00 & 0.85 & &  0.00 & 0.05 & 0.00 & 0.00 & 1.00\\
\end{tabular}
\end{center}
\caption{\em Coverage results for Examples 2,3,4 and 5.
Overall, $\psi_3$ performs best in these examples.
But when linear regressions are used, $\psi_3$ fails.
For random forests, $\psi_3$ does poorly in Example 2.
The most robust behavior is given by the additive model.}
\label{table::table2}
\end{table}

\section{Other Issues}
\label{section::other}

In this section we discuss
two further topics:
other variable importance parameters,
and Shapley values.

\subsection{Other Parameters}

We have focused on LOCO in this paper
but there are many other variable importance parameters
all of which can be estimated in a manner similar
to the methods in this paper.
\cite{samarov1993exploring}
suggested
$\psi = \int (\partial \mu(x,z)/\partial x)^T(\partial \mu(x,z)/\partial x)  dP$.
This parameter is not subject to correlation bias.
Estimating derivatives can be difficult
but in the semiparametric case,
$\psi$ takes a simple form.
In the partially linear model
we have $\psi = ||\beta||^2$
and in the 
partially linear model with interactions
(\ref{eq::model2})
we have
$$
\psi = ||\beta||^2 + 2\beta G^T m_Z + G^T \Sigma_Z G
$$
where
$G_{jk} = \gamma_{jk}$.

Another parameter is inspired by causal inference.
If we viewed $X$ as a treatment and $Z$ as confounding variables,
then (under some conditions) the causal effect,
that is the mean of $Y$ had $X$ been set to $x$,
is given by
Robins' $g$-formula 
$g(x) = \int \mu(x,z) dP(z)$.
We could then define $\psi$
as the variance 
${\rm Var}[g(X)]$
or the average squared derivative of 
$\int (\partial g(x)/\partial x)^T(\partial g(x)/\partial x)dP$.
These parameters do not suffer from correlation bias.
Now
${\rm Var}[g(X)]$ equals
$\beta^T \Sigma_X \beta$ under the partially linear model and is
$(\beta + \Gamma m_Z)^T\Sigma_X (\beta + \Gamma m_Z)$
under the partially linear model with interactions.
Using the derivative,
in the partially linear model
we get
$\psi = ||\beta||^2$ and
in partially linear model with interactions
we get
$$
\psi = ||\beta||^2 + 2\beta \Gamma^T m_Z + \Gamma^T m_Z m_Z^T \Gamma.
$$

The nonparametric partial correlation
is defined by 
$$
\rho = 
\frac{\E[ (Y-\mu(Z)) (X-\nu(Z))]}
{\sqrt{\E (Y-\mu(Z))^2 \E (X-\nu(Z))^2}}.
$$
Under $p_0$ we get a decorrelated version
$$
\rho_0 = 
\frac{\E_0[ (Y-\mu_0(Z)) (X-\nu_0(Z))]}
{\sqrt{\E_0 (Y-\mu_0(Z))^2 \E_0 (X-\nu_0(Z))^2}}=
\frac{\int\int (\mu(x,z)-\mu_0(z))(x-m_X)p(x)p(z) dx dz}
{\sigma_X \sqrt{\int \int\int (y-\mu_0(z))^2 p(y|x,z)p(x)p(z)}}.
$$
More detail about $\rho_0$ are in the appendix.

\subsection{Shapley Values}

A method for defining
variable importance
that has attracted
much attention lately
is based on Shapley values 
\citep{messalas2019model,
aas2019explaining,
lundberg2016unexpected,
covert2020understanding,
fryer2020shapley,
covert2020improving,
israeli2007shapley,
mase2019explaining,
benard2021shaff}.
This is an idea from
game theory where
the goal is to define the
importance of each player
in a cooperative game.
While Shapley values can be useful
in some settings, for example, computer experiments 
\citep{owen2017shapley}
we argue here that
Shapley values do not solve the decorrelation issue and
LOCO or decorrelated LOCO may be
preferable for routine regression problems.
However, this is an active area of research
and the issue is far from settled.
Shapley values may indeed have some other advantages.

The Shapley value is defined as follows.
Suppose we have covariates
$(Z_1,\ldots,Z_d)$
and that we want to measure
the importance of $Z_j$.
For any subset $S\subset\{1,\ldots, d\}$
let
$Z_S = (Z_j:\ j\in S)$ and let
$\mu(S) = \E[Y|Z_S]$.
The Shapley value for $Z_j$ is
$$
s_j=\frac{1}{d!}\sum_\pi [ V(S_j^+(\pi)) - V(S_j(\pi))]
$$
where the sum is over of permutations of
$(Z_1,\ldots, Z_d)$,
$S_j(\pi)$ denotes all variables 
before $Z_j$ in permutation $\pi$,
$S_j^+(\pi)=\{ S_j(\pi)\bigcup\{j\}\}$
and $V(S)$ is some measure of fit the regression model
with variables $S$.
If
$V(S) = - \E[(Y- \mu(S))^2]$,
then
$$
s_j= \frac{1}{d!}\sum_\pi \E[(\mu(S_j)-\mu(S_j^+))^2].
$$
This is just the LOCO parameter
averaged over all possible submodels.
The Shapley value 
for a group of variables
can be defined similarly.

It is clear that this parameter
is difficult to compute
and inference, while possible
\citep{williamson2020efficient}
is very challenging.
The appeal of the Shapley value
is that it
has the following nice properties:

\bigskip

(A1): $\sum_j s_j = \E[ (Y-\mu(Z))^2]$.

(A2) If $\E[(Y- \mu(S\bigcup\{i\}))^2]=\E[(Y- \mu(S\bigcup\{j\}))^2]$
for every $S$ not containing $i$ or $j$, then
$s_i = s_j$.

(A3) If we treat $\{Z_j,Z_k\}$ as one variable,
then its Shapley value $s_{jk}$ satisfies
$s_{jk} = s_j + s_k$.

(A4)
If $\E[(Y- \mu(S\bigcup\{j\}))^2]=\E[(Y- \mu(S))^2]$ for all $S$ then
$s_j = 0$.

\bigskip

However,
we see two problems
with Shapley values
in regression problems.
First, it defines variable importance with respect to
all submodels.
But most of those submodels are not
of interest.
Indeed, most of them would be a bad fit
to the data and are not relevant.
So it is not clear why we should
involve them in any definition of variable importance
or in the axioms.
(An intriguing idea might be to weight the submodels according to their
predictive value.)
Second, they succumb to correlation bias.
To see this, suppose that
$Y = \beta Z_1 + \epsilon$,
that the $Z_j$'s have variance 1 and that
they are perfectly correlated, that is,
$P(Z_j=Z_k)=1$ for every $j$ and $k$.
The Shapley value for $Z_1$ turns out to be
$s_1 = \beta^2/d$
which is close to 0 when
$d$ is large.
In contrast,
$\psi_0=\beta^2$
which seems more appropriate.
The confidence interval
for $\psi_0$ would have infinite length
since the design is singular
which also seems 
appropriate
since estimating the importance of a single variable 
among a set of perfectly correlated variables
should be an impossible 
inferential task.
For these reasons,
we feel that
decorrelated LOCO may have some advantages over Shapley values.

\section{A Balancing Approach}
\label{section::balancing}

In this section,
we briefly outline a different
approach to decorrelation
based on the idea of balancing
from causal inference
\citep{imai2014covariate,fong2018covariate,ning2020robust}.
The idea is to write the decorrelated density
as
$p(x)p(z) = w(x,z) p(x,z)$ where
$w(x,z) = p(z)/p(z|x)$.
If we knew $w$,
we could simply apply LOCO
using weights
$W_i = w(X_i,Z_i)$.
Now let
${\cal H} = \{h_1,\ldots, h_k\}$ be a set of functions
where
$h_j(x,z) = f_j(x)g_j(z)$.
Then
$$
\mu_j \equiv \E[f_j(X)] \E[g_j(Z)] =
\int\int h_j(x,z) p(x)p(z) dx dz =
\int\int w(x,z) dP(x,z).
$$
Now we estimate $\mu_j$ by
$$
\hat\mu_j = \frac{1}{n}\sum_i f_j(X_i)  \ \frac{1}{n}\sum_i g_j(Z_i) 
$$
and we estimate 
$\int\int w(x,z) dP(x,z)$ by
$$
\frac{1}{n}\sum_i W_i h_j(X_i,Z_i).
$$
This leads to the set of equations
$$
\hat\mu_j = \frac{1}{n}\sum_i W_i h_j(X_i,Z_i),\ \ \ j=1,\ldots, k.
$$
This does not completely specify
$W_1,\ldots, W_n$ but we can proceed as follows.
We choose $\hat W$ to minimize
$||W-\hat W||^2$ subject to
$n^{-1}\sum_i \hat W_i =1$ and the above moment constraints.
We define
$h_1(x,z) = 1$ so that the constraint
$n^{-1}\sum_i \hat W_i =1$ is included in the moment constraints.
The solution is
$$
\hat W = \mathbf{1} - H \hat\lambda
$$
where
$\mathbf{1}= (1,\ldots, 1)$,
$H$ is the $n\times k$ matrix with
$H_{ij} = h_j(X_i,Z_i)$,
$$
\hat\lambda = (n^{-1} H^T H)^{-1}(\overline{h}-\hat\mu)
$$
and
$\overline{h} = n^{-1}(\sum_i h_1(X_i,Z_i),\ldots, \sum_i h_k(X_i,Z_i))$.
The LOCO parameter and the regressions
$\hat \mu(x,z)$ and
$\hat\mu(z)$ are all estimated
using the weights.
Space does not permit a full examination of this approach;
we will report more details elsewhere.

\section{Conclusion}
\label{section::conclusion}

We showed that
correlation bias can be removed
from LOCO by modifying the definition
appropriately.
This leads to the parameter $\psi_0$.
As we have seen, getting valid inferences for $\psi_0$
nonparametrically is difficult
even in fairly simple examples.
This is mainly because
the parameter involves the function
$\mu_0(z) = \int \mu(x,z) p(x)dx$
which requires estimating $\mu(x,z)$ in regions
where there is little data due to the dependence between $x$ and $z$.

The easiest remedy is to remove correlated variables
as we did for $\psi_1$ but this led to disappointing behavior.
The other remedy was to use a semiparametric model for $\mu(x,z)$
which led to $\psi_2$ and $\psi_3$.
This appears to be the best approach.
We emphasize that even when the coverage for $\psi_2$ and $\psi_3$
is low, (when the semiparametric model is misspecified),
these parameters are still useful
if we interpret them
as projections.
For example,
$\psi_2$ measures the variable importance
of $X$ in the regression function
of the form $\beta x + f(z)$ that best approximates $\mu(x,z)$.
In the sense $\psi_2$ still captures part of the
variable importance.
\cite{graham2021semiparametrically}
discuss in detail the interpretation of misspecified semiparametric models.

We only dealt with low dimensional models.
The methods extends to
high dimensional models
by using the usual sparsity based estimators
for the nuisance functions $\mu(z)$ and $\nu(z)$.
We plan to explore this in future work.

Finally, we briefly discussed
the role of Shapley values
which have become popular
in the literature on variable importance.
The motivation for using Shapley values
appears to be correlation bias.
Indeed, if the variables were independent,
Shapley values would probably not be considered.
But we argued that they do not
adequately address the problem.
Instead, we believe that some form of decorrelation
might be preferred.

\section{Appendix}

In this appendix
we have proofs
and details for a few other parameters.

\subsection{Proofs}

{\bf Theorem \ref{theorem::psi0}.}
{\em Let
$\psi_0(\mu,p) = \int\int (\mu(x,z)-\mu_0(z))^2 p(x)p(z) dx dz$.
The efficient influence function is
\begin{align*}
\phi(X,Y,Z,\mu,p) &=
\int(\mu(x,Z) - \mu_0(Z))^2 p(x) dx  +
\int(\mu(X,z) - \mu_0(z))^2 p(z) dz\\
& \ +
2 \frac{p(X)p(Z)}{p(X,Z)}(\mu(X,Z)-\mu_0(Z))(Y - \mu(X,Z)) - 2\psi(p).
\end{align*}
In particular,
we have the following von Mises expansion
$$
\psi_0(\mu,p) = 
\psi_0(\overline{\mu},\overline{p}) +
\int\int \phi(x,y,z,\overline{\mu},\overline{p}) dP(x,y,z) + R
$$
where the remainder $R$ satisfies
\begin{align*}
||R|| &=
O(||\overline{p}(x,z)-p(x,z)||^2) +
O(||\overline{\mu}(x,z)-\mu(x,z)||^2) \\
&+
O(||\overline{p}(x,z)-p(x,z)|| \times ||\overline{\mu}(x,z)-\mu(x,z)||).
\end{align*}
}

\bigskip

{\bf Proof.}
To show that
$\phi(X,Y,Z,\mu,p)$ is the efficient
influence function
we verify that
$\phi(X,Y,Z,\mu,p)$ is the Gateuax derivative of $\psi$
and that it has the claimed second order remainder.
We will use the symbol $'$ to denote the Gateuax derivative
defined by
$$
\lim_{\epsilon\to 0} 
\frac{\psi_0( (1-\epsilon)P + \epsilon \delta_{XYZ}) - \psi_0(P)}{\epsilon}
$$
where
$\delta_{XYZ}$ is a point mass at $(X,Y,Z)$.
Also, let $\delta_{X}$ denote a point mass at $X$,
$\delta_{XY}$ a point mass at $(X,Y)$ etc.
Let
$w(x,z) = p(x)p(z)$.
Then
$\psi_0 = \int\int (\mu(x,z)-\mu_0(z))^2 w(x,z) dx dz.$
Now
$$
\psi' = 
\int\int (\mu(x,z)-\mu_0(z))^2 w'(x,z) dx dz +
2 \int\int w(x,z) (\mu(x,z)-\mu_0(z)) (\mu'(x,z) - \mu_0'(z))dx dz
$$
First, note that
$w'(x,z)= p(x)(\delta_Z(z)-p(z)) + p(z) (\delta_X(x)-p(x))$.
Next
$$
\mu(x,z) = \int y p(y|x,z) dy = \int y \frac{p(x,y,z)}{p(x,z)} dy
$$
and
$$
\mu_\epsilon(x,z) =  \int y \frac{p(x,y,z) + \epsilon (\delta_{XYZ}-p(x,y,z))}{p(x,z) + \epsilon (\delta_{XZ}-p(x,z))}dy
$$
So
\begin{align*}
\mu'(x,z) &= \int y \Biggl\{
\frac{p(x,z) (\delta_{XYZ}-p(x,y,z)) - p(x,y,z)(\delta_{XZ}-p(x,z))}{p^2(x,z)}
\Biggr\}dy\\
&=
\frac{Y}{p(X,Z)}I(x=X,z=Z)-\mu(x,z)-
\frac{\mu(x,z)I(x=X,z=Z)}{p(x,z)} + \mu(x,z)\\
&=
\frac{(Y - \mu(x,z))}{p(x,z)} I(x=X,z=Z)
\end{align*}
Now
$\mu_0(z) = \int \mu(x,z) p(x) dx$
so
\begin{align*}
\mu_0'(z) &= \int \mu(x,z) (\delta_X(x)-p(x)) dx + \int p(x) \mu'(x,z)dx\\
&=
\mu(X,z)-\mu_0(z) +
\frac{(Y - \mu(X,z))p(X)}{p(X,z)} I(z=Z)
\end{align*}
so
\begin{align*}
\phi(X,Y,Z,\mu,p) &=
\int(\mu(x,Z) - \mu_0(Z))^2 p(x) dx - \psi +
\int(\mu(X,z) - \mu_0(z))^2 p(z) dz - \psi \\
& \ +
2 w(X,Z)(\mu(X,Z)-\mu_0(Z))
\frac{(Y - \mu(X,Z))}{p(X,Z)}\\
&\ - 2
\int\int w(x,z)(\mu(x,z)-\mu_0(z))(\mu(X,z)-\mu_0(z))dx dz\\
&\ -2\frac{(Y - \mu(X,Z))p(X)}{p(X,Z)} \int w(x,Z)(\mu(x,Z)-\mu_0(Z)) dx\\
&=
\int(\mu(x,Z) - \mu_0(Z))^2 p(x) dx  +
\int(\mu(X,z) - \mu_0(z))^2 p(z) dz - 2\psi \\
& \ +
2 w(X,Z)(\mu(X,Z)-\mu_0(Z))\frac{(Y - \mu(X,Z))}{p(X,Z)}\\
&\ - 2
\int\int w(x,z)(\mu(x,z)-\mu_0(z))(\mu(X,z)-\mu_0(z))dx dz\\
&\ -2\frac{(Y - \mu(X,Z))p(X)p(Z)}{p(X,Z)} \int p(x)(\mu(x,Z)-\mu_0(Z)) dx\\
&=
\int(\mu(x,Z) - \mu_0(Z))^2 p(x) dx  +
\int(\mu(X,z) - \mu_0(z))^2 p(z) dz\\
& \ +
2 \frac{p(X)p(Z)}{p(X,Z)}(\mu(X,Z)-\mu_0(Z))(Y - \mu(X,Z)) - 2\psi(p)
\end{align*}
which has the claimed form.

Now we consider the von Mises remainder.
The remainder at $(p,\mu)$ in the direction of
$(\overline{p},\overline{\mu})$ is
$$
R = \psi(p,\mu)-\psi(\overline{p},\overline{\mu}) - 
\int \phi(u, \overline{\mu},\overline p) dP(u).
$$
Now
\begin{align*}
-R &=
\psi(\overline p,\overline{\mu}) - \psi(p,\mu) \\
&+
\int\int \overline{p}(x)p(z)  (\overline{\mu}(x,z) - \overline{\mu}_0(z))^2  dx dz +
\int\int p(x)\overline{p}(z)  (\overline{\mu}(x,z) - \overline{\mu}_0(z))^2  dx  dz\\
& +
2\int\int\int p(x,y,z) 
\frac{\overline{p}(x)\overline{p}(z)}{\overline{p}(x,z)}(\overline{\mu}(x,z)-\overline{\mu}_0(z))(y - \overline{\mu}(x,z))dx dy dz - 2\psi(p)\\
&=
\int\int \overline{p}(x)p(z)  (\overline{\mu}(x,z) - \overline{\mu}_0(z))^2  dx dz +
\int\int p(x)\overline{p}(z)  (\overline{\mu}(x,z) - \overline{\mu}_0(z))^2  dx  dz\\
& +
2\int\int\int p(x,y,z) 
\frac{\overline{p}(x)\overline{p}(z)}{\overline{p}(x,z)}(\overline{\mu}(x,z)-\overline{\mu}_0(z))(y - \overline{\mu}(x,z))dx dy dz\\
& - \psi(\overline p,\overline \mu)-\psi(p,\mu)\\
&=
\int\int \overline p(x) p(z)  (\overline{\mu}(x,z) - \overline{\mu}_0(z))^2 dxdz+
\int\int p(x) \overline p(z)  (\overline{\mu}(x,z) - \overline{\mu}_0(z))^2 dx dz\\
& +
2\int\int p (x,z) 
\frac{\overline{p}(x)\overline{p}(z)}{\overline{p}(x,z)}(\overline{\mu}(x,z)-\overline{\mu}_0(z))
(\mu(x,z) - \overline{\mu}(x,z))dx  dz\\
& - \psi(\overline p,\overline\mu)-\psi(p,\mu)\\
&=
\int\int \overline{p}(x) p(z)  (\overline{\mu}(x,z) - \overline{\mu}_0(z))^2 dxdz+
\int\int p(x) \overline{p}(z)  (\overline{\mu}(x,z) - \overline{\mu}_0(z))^2dxdz\\
& +
2\int\int \overline{p}(x)\overline{p}(z)(\overline{\mu}(x,z)-\overline{\mu}_0(z))
(\mu(x,z) - \overline{\mu}(x,z))dx  dz\\
& - \int\int \overline{p}(x)\overline{p}(z)  (\overline{\mu}(x,z) - \overline{\mu}_0(z))^2dxdz - 
\int\int p(x)p(z)  (\mu - \mu_{0})^2dxdz + 2S
\end{align*}
where
$$
S = 2 \int\int (p(x,z)-\overline{p}(x,z))
(\mu(x,z)-\overline{\mu}(x,z))\overline{p}(x)\overline{p}(z) (\overline{\mu}(x,z)-\overline{\mu}_0(z))dxdz.
$$
Now
consider the term
$m = \int\int \overline{p}(x)\overline{p}(z)(\overline{\mu}(x,z)-\overline{\mu}_0(z))
(\mu(x,z) - \overline{\mu}(x,z))dx  dz$.
We have
\begin{align*}
m &=
\int\int \overline{p}(x)\overline{p}(z)(\overline{\mu}(x,z)-\overline{\mu}_0(z))
(\mu(x,z) - \overline{\mu}(x,z))dx  dz\\
&=
\int\int \overline{p}(x)\overline{p}(z)
(\overline{\mu}(x,z)-\overline{\mu}_0(z))
(\mu(x,z) - \mu_0(z) + \mu_0(z) - \overline{\mu}_0(z) +
\overline{\mu}_0(z)  -\overline{\mu}(x,z))dx  dz\\
&=
\int\int \overline{p}(x)\overline{p}(z)(\overline{\mu}(x,z)-\overline{\mu}_0(z))(\mu(x,z) - \mu_0(z))dxdz+
\int\int \overline{p}(x)\overline{p}(z)(\overline{\mu}(x,z)-\overline{\mu}_0(z)) ( \mu_0(z) - \overline{\mu}_0(z))dxdz \\
& + 
\int\int \overline{p}(x)\overline{p}(z)(\overline{\mu}(x,z)-\overline{\mu}_0(z)) (\overline{\mu}_0(z) -\overline{\mu}(x,z))dx  dz\\
&=
\int\int \overline{p}(x)\overline{p}(z)\sqrt{\delta}\sqrt{\overline{\delta}}dxdz+ 0-
\int\int \overline{p}(x)\overline{p}(z)\overline{\delta}dxdz
\end{align*}
where
$\delta = \mu(x,z)-\mu_0(z)$ and
$\overline{\delta} = \overline{\mu}(x,z)-\overline{\mu}_0(z)$.
Hence,
\begin{align*}
-R & =
\int\int \overline{p}(x) p(z) \overline{\delta}dxdz + 
\int\int p(x)\overline{p}(z)\overline{\delta} dxdz+
2\int\int \overline{p}(x)\overline{p}(z) \sqrt{\delta}\sqrt{\overline{\delta}}dxdz\\
&-
2\int\int \overline{p}(x)\overline{p}(z) \overline{\delta}dxdz-
\int\int \overline{p}(x)\overline{p}(z) \overline{\delta}dxdz\\
&-
\int\int p(x)p(z)\delta dxdz-
\int\int \overline{p}(x)\overline{p}(z)\delta dxdz +
\int\int \overline{p}(x)\overline{p}(z)\delta dxdz\\
&=
\int\int \overline{p}(x) p(z)\overline{\delta} dxdz+
\int\int p(x) \overline{p}(z) \overline{\delta} dxdz-
\int\int \overline{p}(x) \overline{p}(z)(\sqrt{\overline{\delta}}-\sqrt{\delta})^2 dxdz\\
&-
2\int\int \overline{p}(x)\overline{p}(z)\overline{\delta} dxdz-
\int\int p(x) p(z) \delta dxdz+
\int\int  \overline{p}(x)\overline{p}(z)\delta dxdz\\
&=
\int\int (p(x) - \overline{p}(x)) \overline{p}(z)(\overline{\delta}-\delta) dx dz+
\int\int \overline{p}(x)(p(z) - \overline{p}(z))(\overline{\delta}-\delta) dxdz\\
&+
\int\int (\overline{p}(x)-p(x))(\overline{p}(z)-p(z))\delta dxdz -
\int\int \overline{p}(x) \overline{p}(z) (\sqrt{\overline{\delta}}-\sqrt{\delta})^2 dx dz.
\end{align*}
And hence
\begin{align*}
||R|| &= 
O( ||p(x)-\overline{p}(x)||\ ||\overline\delta-\delta||) + 
O( ||p(z) -\overline{p}(z)||\ ||\overline\delta-\delta||) \\
&+
O( ||p(x)-\overline{p}(x)||\ ||p(z)-\overline{p}(z)||) +
O( ||\overline{\delta} - \delta||^2)\\
&=
O(||\overline{p}(x,z)-p(x,z)||^2) +
O(||\overline{\mu}(x,z)-\mu(x,z)||^2) \\
&+
O(||\overline{p}(x,z)-p(x,z)|| \times ||\overline{\mu}(x,z)-\mu(x,z)||).\ \Box
\end{align*}

\bigskip

{\bf Lemma \ref{lemma::psi1}.}
Suppose that
$||\hat\mu(x,v)-\mu(x,v)|| = o_P(n^{-1/4})$.
Then, when $\psi_1\neq 0$,
we have that
$\sqrt{n}(\hat\psi_1-\psi_1)\rightsquigarrow N(0,\tau^2)$
for some $\tau^2$.

\begin{proof}
We have
\begin{align*}
Y_i - \hat\mu(\hat V_i) &=
(Y_i - \mu(V_i)) + (\mu(V_i)-\mu(\hat V_i)) + (\mu(\hat V_i)-\hat\mu(\hat V_i))\\
&=
(Y_i - \mu(V_i)) - (\hat V_i-V_i)^T \nabla \mu(\tilde V_i) + (\mu(\hat V_i)-\hat\mu(\hat V_i))\\
\end{align*}
for some
$\tilde V_i$ between
$V_i$ and $\hat V_i$.
Squaring, summing and letting $\epsilon_i = Y_i - \mu(V_i)$,
\begin{align*}
\frac{1}{n}\sum_i (Y_i - \hat\mu(\hat V_i))^2 &=
\frac{1}{n}\sum_i \epsilon_i^2 +
\frac{1}{n}\sum_i ((\hat V_i-V_i)^T \nabla \mu(\tilde V_i))^2 +
\frac{1}{n}\sum_i  (\mu(\hat V_i)-\hat\mu(\hat V_i))\\
& +
\frac{2}{n}\sum_i \epsilon_i (\hat V_i-V_i)^T \nabla \mu(\tilde V_i)^2 +
\frac{2}{n}\sum_i \epsilon_i (\mu(\hat V_i)-\hat\mu(\hat V_i))\\
&+
\frac{2}{n} (\hat V_i-V_i)^T \nabla \mu(\tilde V_i)(\mu(\hat V_i)-\hat\mu(\hat V_i))\\
&=
\frac{1}{n}\sum_i \epsilon_i^2 +
\frac{2}{n}\sum_i \epsilon_i (\hat V_i-V_i)^T \nabla \mu(\tilde V_i) +
\frac{2}{n}\sum_i \epsilon_i (\mu(\hat V_i)-\hat\mu(\hat V_i))+ R_n
\end{align*}
where
$R_n = O( ||\hat \delta - \delta||^2) + 
O(||\hat \mu -\mu||^2) +
O(||\hat \delta - \delta||\ ||\hat \mu -\mu||^2) = o_P(n^{-1/2})$.
The mean of the first three terms is
$\E[(Y-\mu(V))^2]$.
By a similar argument,
\begin{align*}
\frac{1}{n}\sum_i (Y_i - \hat\mu(X_i,\hat V_i))^2 &=
\frac{1}{n}\sum_i \tilde\epsilon_i^2 +
\frac{2}{n}\sum_i \tilde\epsilon_i (\hat V_i-V_i)^T \nabla \mu(X_i,\tilde V_i)\\
& +
\frac{2}{n}\sum_i \tilde\epsilon_i (\mu(X_i,\hat V_i)-\hat\mu(X_i,\hat V_i))+ \tilde R_n
\end{align*}
where
$\tilde\epsilon_i = Y_i - \mu(X_i,V_i)$,
$\tilde R_n = O( ||\hat \delta - \delta||^2) + 
O(||\hat \mu -\mu||^2) +
O(||\hat \delta - \delta||\ ||\hat \mu -\mu||^2) = o_P(n^{-1/2})$
and the mean of the first three terms is
$\E[(Y-\mu(X,V))^2]$.
The result follows from the CLT and the fact that
$\sqrt{n}(R_n + \tilde R_n)=o_P(1)$.
\end{proof}

{\bf Lemma \ref{lemma::psi3}.}
We have
that $\psi_0$ under 
the partially linear model with interactions,
is equal to
$\psi_3 = \theta^T \Omega \theta$ where
$$
\Omega = 
 \Sigma_X \otimes
\left(
\begin{array}{cc}
1 & m_Z^T \\
m_Z & \Sigma_Z
\end{array}
\right).
$$

{\bf Proof.}
Let us write
$$
\mu(x,z) = \theta^T W \equiv
\theta^T_0 X + \sum_{j=1}^h \theta_j^T X Z_j
$$
where we have written
$\theta = (\theta_0,\theta_1,\ldots, \theta_h)$
and
so
$\mu_0(z) = 
\theta_0^T m_X + \sum_{j=1}^h \theta_j^T m_X Z_j$.
Thus
\begin{align*}
(\mu(x,z)-\mu_0(z))^2 &=
\theta_0^T (X-m_X)(X-m_X)^T \theta_0 + \sum_{j=1}^h \theta_j^T (X-m_X)(X-m_X)^T Z_j^2 \theta_j \\
&\ \ +
2\sum_{j=1}^h \theta_0^T (X-m_X)(X-m_X)^T Z_j \theta_j +
2\sum_{j\neq k} \theta_j^T (X-m_X)(X-m_X)^T Z_jZ_k \theta_k
\end{align*}
and so
\begin{align*}
E_0[(\mu(x,z)-\mu_0(z))^2] &=
\theta^T \Sigma_X \theta_0 + \sum_{j=1}^h \theta_j^T \Sigma_X (\Sigma_{Z}(j,j) + m_{Z}^2(j)) \theta_j \\
&\ \ +
2\sum_{j=1}^h \theta_0^T \Sigma_X m_Z(j) \theta_j +
2\sum_{j\neq k} \theta_j^T  \theta_k (\Sigma_Z(j,k)+m_Z(j)m_Z(k))\\
&= \theta^T \Omega \theta.\ \  \Box
\end{align*}

\subsection{$\psi_L$ Under the Semiparametric Model}

Here we give the form that 
$\psi_L$ takes
under the semiparametric model.
Under the model
$\mu(x,z) = f(z) + x^T \beta(z)$,
we have
$\psi_L = \E[ \beta^T(Z) (X-\nu(Z))(X-\nu(Z))^T \beta(Z)]$
which has efficient influence function
\begin{align*}
\phi &= 2\beta(Z)^T (X-\nu(Z))(X-\nu(Z))^T V^{-1}(Z) X Y \\
&\ \ -
2\beta(Z)^T (X-\nu(Z))(X-\nu(Z))^T V^{-1}(Z) (X-\nu(Z))(X-\nu(Z))^T \beta \\
&\ \ -
\beta^T (X-\nu(Z))(X-\nu(Z))^T \beta - \psi_L.
\end{align*}
When
$\mu(x,z) = \beta^T x + \sum_{jk}\gamma_{jk}x_j z_k + f(z)$ 
then
$$
\psi_L = \theta^T (\Omega_{11} + \Omega_{12} + \Omega_{21} + \Omega_{22})
$$
where
$$
\Omega_{11} = 
\left(
\begin{array}{cc}
1 & m_Z^T \\
m_Z & \Sigma_Z + m_Z m_Z^T
\end{array}
\right) \otimes
\Sigma_X,
$$
$$
\Omega_{12} = 
\left(
\begin{array}{cc}
1 & m_Z^T \\
m_Z & \Sigma_Z + m_Z m_Z^T
\end{array}
\right) \otimes
\E[ (X-m_X) (m_X - \nu(Z))^T],
$$
$$
\Omega_{21} = 
\left(
\begin{array}{cc}
1 & m_Z^T \\
m_Z & \Sigma_Z + m_Z m_Z^T
\end{array}
\right) \otimes
\E[ (X-m_X) (m_X - \nu(Z))^T],
$$
$$
\Omega_{12} = 
\left(
\begin{array}{cc}
1 & m_Z^T \\
m_Z & \Sigma_Z + m_Z m_Z^T
\end{array}
\right) \otimes
\E[ (m_X - \nu(Z))(X-m_X)^T],
$$
$$
\Omega_{22} = 
\left(
\begin{array}{cc}
1 & m_Z^T \\
m_Z & \Sigma_Z + m_Z m_Z^T
\end{array}
\right) \otimes
\E[ (m_X - \nu(Z))(m_X-\nu(Z))^T].
$$
We omit the expression for influence function.

\subsection{Partial Correlation}

In this section,
we give the decorrelated version of the partial correlation.
Recall that
$$
\rho_0 =
\frac{\E_0[ (Y-\mu_0(Z)) (X-\nu_0(Z))]}
{\sqrt{\E_0 (Y-\mu_0(Z))^2 \E_0 (X-\nu_0(Z))^2}}=
\frac{\int\int (\mu(x,z)-\mu_0(z))(x-m_X)p(x)p(z) dx dz}
{\sigma_X \sqrt{\int \int\int (y-\mu*(z))^2 p(y|x,z)p(x)p(z)}}.
$$

\begin{theorem}
The efficient influence function for $\rho_0$ is
$$
\phi =
\frac{1}{\sqrt{\phi_2 \phi_3}}
\Biggl\{
\phi_1 -
\frac{\psi_1}{2\psi_2} \phi_2 - 
\frac{\psi_2}{2\psi_3} \phi_3\Biggr\}
$$
where, in this section, we define
\begin{align*}
\psi_1 &= \int\int (\mu(x,z)-\mu_0(z))(x-m_X)p(x)p(z) dx dz\\
\psi_2 &= \sigma_X^2\\
\psi_3 &= \int \int\int (y-\mu_0(z))^2 p(y|x,z)p(x)p(z)dx dz dy
\end{align*}
and
\begin{align*}
\phi_1 &=
\mu_0(X)(X-m)+
(X-m)\frac{Y-\mu(X,Z)}{p(X,Z)}p(X)  p(Z) + 
(X-m)p(X)\mu(X,Z) - \mu_0(z) - 2\psi_1\\
\phi_2 &= (X-m)^2 - \sigma^2_X\\
\phi_3 &= 
(Y-v(Z))^2 - \psi_3 -2 \frac{p(X)p(Z)}{p(X,Z)} (Y-\mu(X,Z)).
\end{align*}
\end{theorem}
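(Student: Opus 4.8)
The plan is to exploit the fact that $\rho_0$ is a smooth, explicit function of the three simpler functionals defined in the statement, namely $\rho_0 = g(\psi_1,\psi_2,\psi_3)$ with $g(a,b,c)=a\,b^{-1/2}c^{-1/2}$, and then to invoke the chain rule (delta method) for influence functions: if each $\psi_j$ has efficient influence function $\phi_j$, then the efficient influence function of $g(\psi_1,\psi_2,\psi_3)$ is $\sum_j (\partial g/\partial \psi_j)\,\phi_j$. So the problem splits into (i) finding $\phi_1,\phi_2,\phi_3$, and (ii) differentiating $g$. Step (ii) is immediate: $\partial g/\partial a=(\psi_2\psi_3)^{-1/2}$, $\partial g/\partial b=-\tfrac12\,\psi_1\psi_2^{-1}(\psi_2\psi_3)^{-1/2}$, and $\partial g/\partial c=-\tfrac12\,\psi_1\psi_3^{-1}(\psi_2\psi_3)^{-1/2}$, which assemble, after factoring out $(\psi_2\psi_3)^{-1/2}$, into the displayed bracketed form. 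Hence the real work is step (i).

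For $\psi_2=\sigma_X^2$ the efficient influence function is the textbook variance gradient, $\phi_2=(X-m_X)^2-\sigma_X^2$. For $\psi_1$ and $\psi_3$ I would mimic the Gateaux-derivative computation carried out in the proof of Theorem \ref{theorem::psi0}. Concretely, I reuse the two derivatives established there, namely $\mu'(x,z)=\frac{Y-\mu(x,z)}{p(x,z)}\,I(x=X,z=Z)$ and $\mu_0'(z)=\mu(X,z)-\mu_0(z)+\frac{(Y-\mu(X,z))\,p(X)}{p(X,z)}\,I(z=Z)$, together with $w'(x,z)=p(x)(\delta_Z(z)-p(z))+p(z)(\delta_X(x)-p(x))$ for the product density $w(x,z)=p(x)p(z)$. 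Differentiating $\psi_1=\int\int(\mu(x,z)-\mu_0(z))(x-m_X)\,w(x,z)\,dx\,dz$ along the contamination path $(1-\epsilon)P+\epsilon\,\delta_{XYZ}$ and collecting the resulting point-mass terms should reproduce the stated $\phi_1$, with the density-ratio factor $p(X)p(Z)/p(X,Z)$ arising exactly as in the $\psi_0$ calculation. Differentiating $\psi_3$, which is the decorrelated mean-squared residual $\int\int\int (y-\mu_0(z))^2 p(y|x,z)\,w(x,z)\,dx\,dz\,dy$, should likewise yield $\phi_3$: the leading $(Y-\mu_0(Z))^2$ term comes from the outer square, while the density-ratio correction $-2\frac{p(X)p(Z)}{p(X,Z)}(Y-\mu(X,Z))$ comes from perturbing $\mu$ (equivalently $p(y|x,z)$) inside the integral.

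The main obstacle is the bookkeeping in step (i). Unlike $\psi_2$, both $\psi_1$ and $\psi_3$ involve the counterfactual product measure $p(x)p(z)$ and the inner conditional mean $\mu_0(z)=\int\mu(x,z)p(x)\,dx$, so their Gateaux derivatives pick up several interacting contributions (from perturbing $p(x)$, $p(z)$, and $\mu$ simultaneously) that must be combined and simplified to the compact forms claimed; this is precisely the step where the $\psi_0$ proof is most delicate, and I expect the same care to be needed here. Once $\phi_1$ and $\phi_3$ are pinned down and checked to satisfy $\E[\phi_j]=0$ under $P$ (so that each is a genuine canonical gradient in the nonparametric tangent space), assembling $\phi$ via the chain-rule coefficients computed above and verifying that the result is the efficient influence function of $\rho_0$ is routine.
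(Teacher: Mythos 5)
Your proposal takes essentially the same route as the paper's own proof: write $\rho_0 = f(\psi_1,\psi_2,\psi_3)$ with $f(a,b,c)=a/\sqrt{bc}$, apply the chain rule for influence functions, take the textbook $\phi_2$ for the variance, and obtain $\phi_1,\phi_3$ by the same Gateaux-derivative bookkeeping as in Theorem \ref{theorem::psi0}, reusing $\mu'(x,z)$, $\mu_0'(z)$ and $w'(x,z)$. One remark: your explicitly computed partials give the prefactor $(\psi_2\psi_3)^{-1/2}$ and the coefficient $\psi_1/(2\psi_3)$ on $\phi_3$, so the statement's $\sqrt{\phi_2\phi_3}$ and $\psi_2/(2\psi_3)$ (reproduced verbatim in the paper's proof) are typos that your derivation silently corrects.
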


\begin{proof}
Let us write
$\rho_0 = f(\psi_1,\psi_2,\psi_3)$
where
$f(a,b,c) = a/\sqrt{bc}$ and
\begin{align*}
\psi_1 &= \E_0[ (Y-\mu_0(Z)) (X-\nu_0(Z))]\\
\psi_2 &= \sigma_X^2\\
\psi_3 &= \int \int\int (y-\mu*(z))^2 p(y|x,z)p(x)p(z).
\end{align*}
So the influence function is
$$
f_1(\psi_1,\psi_2,\psi_3) \phi_1 + 
f_2(\psi_1,\psi_2,\psi_3) \phi_2 + 
f_3(\psi_1,\psi_2,\psi_3) \phi_3
$$
where
$f_j = \partial f/\partial \psi_j$
and
$\phi_j$ is the influence function for $\psi_j$.
Hence,
$$
\phi =
\frac{1}{\sqrt{\phi_2 \phi_3}}
\Biggl\{
\phi_1 -
\frac{\psi_1}{2\psi_2} \phi_2 - 
\frac{\psi_2}{2\psi_3} \phi_3\Biggr\}.
$$
Now
$$
\psi_1 = \int (\mu_0(x)-\psi_0)(x-m_X) p(x) dx =   \int \mu_0(x) (x-m_X) p(x) dx 
$$
where
$\mu_0(x) = \int \mu(x,z) p(z)$.
So
\begin{align*}
\phi_1 &= 
\int \mu_0(x)' (x-m_X) p(x) dx -
\int \mu_0(x)m_X' p(x) dx +
\int \mu_0(x) (x-m_X) p(x)' dx \\
&=\int \mu_0(x)' (x-m_X) p(x) dx -
\int \mu_0(x)(X-m_X) p(x) dx +
 \mu_0(X) (X-m_X) - \psi_1\\
&=
\mu_0(X) (X-m_X) +
\int \mu_0(x)' (x-m_X) p(x) dx - 2\psi_1.
\end{align*}
Now
\begin{align*}
\mu_0(x)' &= \int \mu'(x,z) p(z)dz + \mu(x,Z) - \mu_0(z)\\
&=
\int \frac{Y-\mu(x,z)}{p(x,z)}I(X=x,Z=z)  p(z)dz + \mu(x,Z) - \mu_0(z)\\
&=
I(x=X)\frac{Y-\mu(x,Z)}{p(x,Z)}  p(Z) + \mu(x,Z) - \mu_0(z).
\end{align*}
Thus
\begin{align*}
\int (x-m)p(x)
& \Biggl\{I(x=X)\frac{Y-\mu(x,Z)}{p(x,Z)}  p(Z) + \mu(x,Z) - \mu_0(z) \Biggr\}\\
&\ \ \ 
(X-m)\frac{Y-\mu(X,Z)}{p(X,Z)}p(X)  p(Z) + 
(X-m)p(X)\mu(X,Z) - \mu_0(z)
\end{align*}
So
\begin{align*}
\phi_1 &=
\mu_0(X)(X-m)+
(X-m)\frac{Y-\mu(X,Z)}{p(X,Z)}p(X)  p(Z) + 
(X-m)p(X)\mu(X,Z) - \mu_0(z) - 2\psi_1.
\end{align*}
Also
$$
\phi_2 = (X-m)^2 - \sigma^2.
$$
Now we turn to
$\psi_3 = \int \int\int (y-\mu*(z))^2 p(x,y,z)$.
Then
\begin{align*}
\phi_3 &= 
(Y-v(Z))^2 - \psi_3 -2\int p(x,y,z) (y-v(z)) v'(z)dz\\
&=
(Y-v(Z))^2 - \psi_3 -2\int p(x,z) (\mu-v(z)) v'(z)dz\\
\end{align*}
and
$$
v'(z) = \mu(X,z)-v(z)+ I(z=Z) \frac{p(X)(Y-\mu(X,z))}{p(X,z)}
$$
so that
\begin{align*}
\phi_3 &= 
(Y-v(Z))^2 - \psi_3 -2\int p(x,z) (\mu-v(z)) v'(z)dz\\
&=
(Y-v(Z))^2 - \psi_3 -2 \frac{p(X)p(Z)}{p(X,Z)} (Y-\mu(X,Z)).
\end{align*}
The remainder can be shown to be second order
in a similar way to $\psi_0$.
We omit the details.
\end{proof}

\subsection{Varying Coefficient Model}

Let
$\mu(x,z) = x^T\beta(z) + f(z)$.
In this case $\psi_0$ becomes
$\psi_4 = {\rm tr}(\Sigma_X H)$.
Define
\begin{align*}
V(z) &= {\rm Var}[X|Z=z] &\ C(z) &= {\rm Cov}[X,Y|Z=z] &\ f(z) &= \mu(z) - \nu(z)^T \beta(z)\\
\beta(z) &= V^{-1}(Z) C(z) &\ M&= \E[\beta(Z)] &\ S &= {\rm Var}[\beta(Z)].
\end{align*}

\begin{lemma}
The efficient influence function for $\psi_4$ is
$$
\phi =
{\rm tr}(\Sigma_X \phi_H) + (X-m_X)^T H (X-m) - \psi_4
$$
where
$H = \E[\beta(Z)\beta(Z)^T]$,
\begin{align*}
\phi_H &=
\beta(Z)\beta(Z)^T - H +
\beta(Z)[ Y X^T - \beta(Z)^T (X-\nu(Z))(X-\nu(Z))^T]V^{-1}(Z)\\
&\ +
V^{-1}(Z)[XY - (X-\nu(Z))(X-\nu(Z))^T \beta(Z)]\beta(Z)^T.
\end{align*}
\end{lemma}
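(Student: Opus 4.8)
The plan is to exploit the bilinear structure of the parameter. Since $\psi_4 = {\rm tr}(\Sigma_X H)$ with $\Sigma_X = {\rm Var}[X]$ and $H = \E[\beta(Z)\beta(Z)^T]$ both functionals of $P$, I would compute the Gateaux derivative exactly as in the proof of Theorem \ref{theorem::psi0}, using the fact that a bilinear functional ${\rm tr}(AB)$ has influence function ${\rm tr}(\phi_A B) + {\rm tr}(A \phi_B)$, where $\phi_A$ and $\phi_B$ are the mean-zero influence functions of $A$ and $B$. This reduces the whole problem to computing $\phi_{\Sigma_X}$ and $\phi_H$ separately and then reassembling.

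The first piece is standard: the influence function of the covariance is $\phi_{\Sigma_X} = (X-m_X)(X-m_X)^T - \Sigma_X$, so that ${\rm tr}(\phi_{\Sigma_X} H) = (X-m_X)^T H (X-m_X) - {\rm tr}(\Sigma_X H) = (X-m_X)^T H (X-m_X) - \psi_4$. This already produces the last two terms of the claimed $\phi$, and the entire remaining content is the term ${\rm tr}(\Sigma_X \phi_H)$, i.e. the identification of $\phi_H$.

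For $\phi_H$, since $H = \E_Z[\beta(Z)\beta(Z)^T]$ I would split the Gateaux derivative into the contribution from perturbing the marginal law of $Z$, which yields the plug-in term $\beta(Z)\beta(Z)^T - H$, and the contribution from perturbing the conditional law, which enters through $\beta(z)$ and produces $\beta(Z)\phi_\beta(Z)^T + \phi_\beta(Z)\beta(Z)^T$. The coefficient $\beta(z) = V^{-1}(z) C(z)$ solves the conditional normal equation $\E[(X-\nu(z))\{(Y-\mu(z)) - (X-\nu(z))^T\beta(z)\} \mid Z=z] = 0$, so I would read off its influence function by the usual M-estimation formula: invert the conditional Jacobian, which is exactly $V(z) = \E[(X-\nu(z))(X-\nu(z))^T\mid z]$, and multiply by the estimating function, with the localization in $z$ cancelling against the density $p(z)$ carried by the outer expectation so that everything is evaluated at the observed $Z$. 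This gives $\phi_\beta(Z) = V^{-1}(Z)(X-\nu(Z))\{(Y-\mu(Z)) - (X-\nu(Z))^T\beta(Z)\}$, whose outer products $\beta(Z)\phi_\beta(Z)^T$ and $\phi_\beta(Z)\beta(Z)^T$ match the two bracketed factors in the stated $\phi_H$.

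Finally I would verify the von Mises remainder is genuinely second order, following the pattern already used for $\psi_2$ and $\psi_3$: the error should decompose into cross-products of $\|\hat\nu - \nu\|$, $\|\hat\mu - \mu\|$, $\|\hat V - V\|$ and $\|\hat\beta - \beta\|$, delivering the same double-robustness property. The main obstacle will be the conditional-coefficient calculation, namely correctly tracking how the pointwise-in-$z$ perturbation $\delta_Z$ localizes the influence function of $\beta(z)$ and propagates it through both factors of $\beta(Z)\beta(Z)^T$. In particular, I expect the delicate point to be keeping the cross terms in the numerator of $\phi_\beta$ in their centered form $(X-\nu(Z))(Y-\mu(Z))$ rather than an uncentered $XY$, since only the centered version satisfies $\E[\phi_\beta(Z)\mid Z] = V^{-1}(Z)C(Z) - \beta(Z) = 0$ and hence guarantees that $\phi_H$ is mean zero as required of an influence function.
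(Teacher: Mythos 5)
Your derivation is correct, and it proceeds exactly in the style the paper uses for its other calculations (the point-mass Gateaux derivatives in the proof of Theorem \ref{theorem::psi0}, and the product-rule structure $\phi = 2\theta^T\Omega\phi_\theta + \theta^T\mathring{\Omega}\theta - \psi_3$ of Lemma \ref{lemma::psi3}); note that the paper states this lemma with no proof at all, so your argument fills a gap rather than paralleling an existing one. The bilinear product rule ${\rm tr}(AB)' = {\rm tr}(A'B) + {\rm tr}(AB')$, the covariance influence function $(X-m_X)(X-m_X)^T - \Sigma_X$ (which indeed produces the last two terms of $\phi$), and the localized computation of $\phi_\beta$ are all sound. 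One detail your M-estimation shortcut handles implicitly but correctly: differentiating $\beta(z) = V^{-1}(z)C(z)$ directly gives $\beta' = V^{-1}(C' - V'\beta)$, and the centering terms $-C(Z)$ and $+V(Z)\beta(Z)$ arising from the perturbations of the conditional moments cancel precisely because $C(z) = V(z)\beta(z)$; this is why the estimating-equation route lands cleanly on $\phi_\beta(Z) = V^{-1}(Z)(X-\nu(Z))\{(Y-\mu(Z)) - (X-\nu(Z))^T\beta(Z)\}$.

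The point you flag about centering is more than delicate bookkeeping: it identifies what must be a typo in the lemma as printed. With the raw products $XY$ and $YX^T$, one has
\begin{align*}
\E\bigl[XY - (X-\nu(Z))(X-\nu(Z))^T\beta(Z) \mid Z\bigr]
&= C(Z) + \nu(Z)\mu(Z) - V(Z)\beta(Z) = \nu(Z)\mu(Z),
\end{align*}
so the displayed $\phi_H$ has conditional mean $V^{-1}(Z)\nu(Z)\mu(Z)\beta(Z)^T$ plus its transpose counterpart, which is not zero in general; as literally written it is therefore not a valid influence function. Your centered version is the correct one, and it is also consistent with the centered $\phi_\beta$ the paper itself uses for $\psi_2$: the lemma's $YX^T$ and $XY$ should be read as $(Y-\mu(Z))(X-\nu(Z))^T$ and $(X-\nu(Z))(Y-\mu(Z))$, respectively. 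Your deferral of the von Mises remainder to the ``standard'' second-order pattern matches the paper's own practice for $\psi_2$ and $\psi_3$ and is acceptable at this level of detail.
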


Hence, the estimator is
$$
\hat\psi_4 =
\frac{1}{n}\sum_i 
{\rm tr} (\hat\Sigma_X \hat{\phi}_H(U_i)) +
\frac{1}{n}\sum_i (X_i - \overline{X})^T H(U_i) (X_i - \overline{X}).
$$

\bibliographystyle{imsart-nameyear}
\bibliography{paper}

\end{document}